\newcommand{\ignore}[1]{}
\newcommand{\Dyes}{\calD_{\mathrm{yes}}}
\newcommand{\Dno}{\calD_{\mathrm{no}}}
\newcommand{\fyes}{\boldf_{\mathrm{yes}}}
\newcommand{\fno}{\boldf_{\mathrm{no}}}
\newcommand{\yes}{\mathrm{yes}}
\newcommand{\no}{\mathrm{no}}
\newcommand{\sens}{\mathrm{sens}}
\newcommand{\Tal}{\mathsf{Talagrand}}
\begin{document}

\title{Mildly Exponential Lower Bounds on Tolerant Testers\\ for Monotonicity, Unateness, and Juntas}

\author{
Xi Chen \thanks{Columbia University.} \and 
Anindya De \thanks{University of Pennsylvania.} \and 
Yuhao Li \thanks{Columbia University.} \and 
Shivam Nadimpalli \thanks{Columbia University.} \and 
Rocco A. Servedio \thanks{Columbia University.} 
\vspace{0.5em}
}

\date{
}

\pagenumbering{gobble}

\maketitle  

\begin{abstract}

We give the first super-polynomial (in fact, mildly exponential) lower bounds for tolerant testing (equivalently, distance estimation) of monotonicity, unateness, and juntas with a \emph{constant} separation between the ``yes'' and ``no'' cases. Specifically, we give

\begin{itemize}
\item A $2^{\Omega(n^{1/4}/\sqrt{\eps})}$-query lower bound for non-adaptive, two-sided tolerant monotonicity testers and unateness testers when the ``gap'' parameter $\eps_2-\eps_1$ is equal to $\eps$, for any $\eps \geq 1/\sqrt{n}$;
\item A $2^{\Omega(k^{1/2})}$-query lower bound for non-adaptive, two-sided tolerant junta testers when the gap parameter is an absolute constant.
\end{itemize}
In the constant-gap regime no non-trivial prior lower bound was known for monotonicity, the best prior lower bound known for unateness was $\tilde{\Omega}(n^{3/2})$ queries, and the best prior lower bound known for  juntas was $\poly(k)$ queries.

\end{abstract}

%
%
%
%
%
%
%

\newpage
\setcounter{page}{1}
\pagenumbering{arabic}


\section{Introduction}
\label{sec:intro}

A \emph{monotone} Boolean function $f: \zo^n \to \zo$ is one for which $f(x) \leq f(y)$ whenever $x$ is coordinate-wise less than or equal to $y$, and a \emph{$k$-junta} $f: \zo^n \to \zo$ is a function which depends on at most $k$ of its $n$ input coordinates.
Monotonicity testing, the closely related problem of unateness\footnote{A Boolean function $f: \zo^n \to \zo$ is \emph{unate} if $g(x)=f(x\oplus r)$ is monotone for some $r \in \zo^n$, where $\oplus$ denotes coordinate-wise XOR; equivalently, $f$ is either nondecreasing
or nonincreasing in each coordinate.} testing, and junta testing are among the most fundamental and intensively studied problems in the field of property testing of Boolean functions; indeed, many of the earliest and most influential works in this area study these problems \cite{GGLRS,DGLRRS,FLNRRS,PRS02,FKRSS03}.  

Many different variants of these testing problems have been analyzed by now, including functions with non-Boolean domains and ranges (see e.g.~\cite{DGLRRS,FLNRRS,HalevyKushilevitz:07,BCGM12,CS13b,BCS18,BCS20,PRW22,BCS23,BKKM23,
FKRSS03,Blaisstoc09})
and distribution-free testing (see e.g.~\cite{HalevyKushilevitz:07,
LCSSX19,Bshouty19}).
In this paper we study the original setting for each of these testing problems, i.e.~we consider Boolean functions $f: \zo^n \to \zo$ and we measure distance between functions with respect to the uniform distribution over $\zo^n$.

After two decades of intensive research, the original problems of testing whether an unknown $f: \zo^n \to \zo$ is monotone\hspace{0.04cm}/\hspace{0.04cm}unate\hspace{0.04cm}/\hspace{0.04cm}$k$-junta versus $\eps$-far from being monotone\hspace{0.04cm}/\hspace{0.04cm}unate\hspace{0.04cm}/\hspace{0.04cm}$k$-junta are now quite well understood. We briefly recall the current state of the art:

\begin{itemize}

\item {\bf Monotonicity}:
Building on 
\cite{GGLRS,CS13a,CST14}, Khot et al.~\cite{KMS18}  gave an $\tilde{O}(\sqrt{n}/\eps^2)$-query non-adaptive monotonicity testing algorithm with one-sided error.\footnote{A tester is \emph{non-adaptive} if the choice of its $i$-th query point does not depend on the responses received to queries $1,\dots,i-1$.
A \emph{one-sided} tester for a class of functions is one which must accept every function in the class with probability 1.}
An $\tilde{\Omega}(\sqrt{n})$-query lower bound for non-adaptive two-sided testers was given by Chen et al.~\cite{CWX17stoc}  (strengthening earlier non-adaptive lower bounds in \cite{FLNRRS,CST14,CDST15}); \cite{CWX17stoc} also gave an $\tilde{\Omega}(n^{1/3})$-query lower bound for adaptive algorithms with two-sided error.  
Thus far, no known algorithms for monotonicity testing use  adaptivity.

\item {\bf Unateness}:
In \cite{CS16a} Chakrabarti and Seshadhri  gave an $\tilde{O}(n/\eps)$-query non-adaptive unateness tester with one-sided error (see also \cite{GGLRS,BCPRS20}), and
Chen et al.~\cite{CWX17stoc} gave a near-matching $\tilde{\Omega}(n)$-query lower bound for this setting.
Chen and Waingarten \cite{CW19} gave an $\tilde{O}(n^{2/3}/\eps^2)$-query adaptive algorithm with one-sided error, strengthening prior adaptive algorithms  \cite{KS16,BCPRS20,CWX17focs}.  The \cite{CW19} adaptive algorithm is near-optimal, since Chen et al.~\cite{CWX17arxiv} gave an $\tilde{\Omega}(n^{2/3})$-query lower bound for two-sided adaptive testers.

%
%
%
%
%
%
%

\item {\bf Juntas}:
State-of-the-art adaptive algorithms for testing $k$-juntas due to \cite{Blaisstoc09,Bshouty19} use $\tilde{O}(k/\eps)$ queries, and near-matching $\tilde{\Omega}(k/\eps)$-query lower bounds are known for adaptive algorithms \cite{Saglam18} (see also \cite{PRS02,ChocklerGutfreund:04,FKRSS03,BGMW13,STW15} for a long line of earlier works).  For non-adaptive testers, in \cite{Blais08} Blais gave an 
$\tilde{O}(k^{3/2})/\eps$-query algorithm, and Chen et al.~gave an essentially matching $\tilde{\Omega}(k^{3/2}/\epsilon)$ lower bound in \cite{CSTWX18jacm}.
\end{itemize}

To summarize, for all three of the properties we consider --- being monotone, being unate, and being a $k$-junta ---  matching or near-matching polynomial upper and lower bounds are known, for both adaptive and non-adaptive algorithms, for testing whether a function \emph{perfectly} satisfies the property or is $\eps$-far from having the property.

\medskip

\noindent {\bf Tolerant Testing.}  The requirement that functions in the ``yes-case'' of standard property testing must \emph{exactly} satisfy the property means that algorithms under this framework may be undesirably brittle; thus it is natural to consider a noise-tolerant variant of standard property
testing. With this motivation, in \cite{PRR06} Parnas et al.~introduced a natural generalization of standard property testing, which they called  \emph{tolerant} property testing. For parameters $0 \leq \eps_1 < \eps_2,$ an \emph{$(\eps_1,\eps_2)$-tolerant tester} for a class of functions is a query algorithm which must accept with high probability if the input  is $\eps_1$-close to some function in the class and reject with high probability if the input is $\eps_2$-far from every function in the class (so ``standard'' testing corresponds to $(0,\eps)$-tolerant testing).  

It is well known \cite{PRR06} and not difficult to see that tolerant property testing is essentially equivalent to the problem of estimating the distance to the nearest function that has the property. Thus, results on tolerant testing can alternately be phrased as results on distance estimation, but for simplicity throughout this paper we will describe our results in terms of tolerant testing. 

Tolerant property testers clearly enjoy an attractive level of robustness that typically is not shared by standard property testers. For this and other reasons, much property testing research in recent years  has attempted to give algorithms and lower bounds for tolerant testing of various properties.  However, while it is widely believed that tolerant testing is generally a hard algorithmic problem, it has proved to be quite challenging to establish strong upper or lower bounds on the query complexity of tolerant testing; much less is known here than in the standard (non-tolerant) setting.  This lack of understanding is particularly acute for the three problems of monotonicity, unateness, and juntas which are our focus. We recall the state of the art prior to our results:

\begin{itemize}

\item 
{\bf Tolerant Monotonicity and Unateness Testing:}
Little is known about non-trivial algorithms for tolerant monotonicity or unateness testing.  It is folklore that the known $2^{\tilde{O}(\sqrt{n/\eps})}$-query agnostic learning algorithms for monotone (respectively, unate) $n$-variable Boolean functions \cite{BshoutyTamon:96,KKMS:08,FKV17} imply the existence of $(\eps_1,\eps_2)$-tolerant non-adaptive testing algorithms with query complexity $2^{\tilde{O}(\sqrt{n/(\eps_2-\eps_1)})}$. Strengthening earlier work of \cite{FattalRon10}, Pallavoor et al.~\cite{PRW22} gave an efficient algorithm for the case when there is a large multiplicative gap between $\eps_1$ and $\eps_2$; their algorithm uses $\poly(n,1/\eps)$ queries to non-adaptively $(\eps,\tilde{O}(\sqrt{n}) \cdot \eps)$-test monotonicity.
Turning to lower bounds, Pallavoor et al.~\cite{PRW22} showed that for $0<\kappa<1/2$, any non-adaptive $(1/n^{1-\kappa},1/\sqrt{n})$-tolerant tester for monotonicity or unateness must make at least $2^{n^{\kappa}}$ queries. We note that while this can give a strong lower bound in certain regimes when all of $\eps_1,\eps_2,\eps_2-\eps_1$ are inverse-polynomially small, it does not give any lower bound when any of these values is lower bounded by a constant. Prior to the current work, the only known lower bound in the constant-gap regime was due to Levi and Waingarten \cite{LW19}, who showed that for constants $0<\eps_1 < \eps_2$, tolerant unateness testing requires $\tilde{\Omega}(n)$ (possibly adaptive) queries and requires $\tilde{\Omega}(n^{3/2})$ non-adaptive queries.

\item {\bf Tolerant Junta Testing:} De et al.~\cite{DMN19} obtained a $2^{k} \cdot \poly(k,1/(\eps_2-\eps_1))$-query non-adaptive algorithm for testing $k$-juntas, and subsequently Iyer et al.~\cite{ITW21} gave an adaptive algorithm which makes $2^{\tilde{O}(\sqrt{k/(\eps_2-\eps_1)})}$ queries.  In terms of lower bounds, the above-mentioned result of Pallavoor et al.~\cite{PRW22} for the ``small gap'' regime also holds for $n/2$-junta testing, which implies that for $0<\kappa<1/2$, any non-adaptive $(1/k^{1-\kappa},1/\sqrt{k})$-tolerant tester for $k$-juntas must make at least $2^{k^{\kappa}}$ many queries. In the constant-gap regime, Levi and Waingarten \cite{LW19} showed that for constants $0<\eps_1 < \eps_2$, any non-adaptive $(\eps_1,\eps_2)$-tolerant $k$-junta tester must make $\tilde{\Omega}(k^2)$ queries. Very recently Chen and Patel \cite{ChenPatel23} have given an $\Omega(k^{-\Omega(\log(\eps_2-\eps_1))})$-query lower bound for adaptive $k$-junta testers, which is a $\poly(k)$-query lower bound when $\eps_2$ and $\eps_1$ differ by an additive constant.

%
%
%
%

\end{itemize}

Summarizing the above results, major gaps remain in our understanding of the complexity of $(\eps_1,\eps_2)$-tolerant testing for all three properties of monotonicity, unateness, and being a $k$-junta. Given the presumed difficulty of tolerant testing, this lack of knowledge seems to be most acute on the lower bounds side, and particularly in the most interesting and natural regime in which $\eps_1<\eps_2$ are constants independent of $n$. Indeed, in this setting, even for non-adaptive testers the best known lower bound prior to the current work was $\tilde{\Omega}(n^{3/2})$ for unateness \cite{LW19} and $\poly(k)$ for $k$-juntas \cite{LW19,ChenPatel23}, and no lower bound seems to have been known for monotonicity other than the $\tilde{\Omega}(\sqrt{n})$ lower bound for standard (non-tolerant) non-adaptive monotonicity testing \cite{CWX17stoc}.

\subsection{Our Results}

\begin{table}[t]
\setlength{\tabcolsep}{20pt}

\centering
\renewcommand{\arraystretch}{1.5}
\begin{tabular}{@{}llll@{}}
\toprule
             & Upper Bound                                                                              & Prior Work                                                                & This Work                                                                            \\ \midrule

Monotonicity & {\renewcommand{\arraystretch}{1.05}\begin{tabular}[t]{@{}l@{}}$\exp({\widetilde{O}(\sqrt{n})})$\\ (Folklore)\end{tabular}}         &      
{\renewcommand{\arraystretch}{1.05}\begin{tabular}[t]{@{}l@{}}$\widetilde{\Omega}(\sqrt{n})$ \\ \cite{CWX17stoc}  \end{tabular}}                                                                             & {\renewcommand{\arraystretch}{1.05}\begin{tabular}[t]{@{}l@{}}$\exp({\Omega(n^{1/4})})$ \\ (\Cref{thm:lower-tolerant}) \end{tabular}} \\

Unateness    & {\renewcommand{\arraystretch}{1.05}\begin{tabular}[t]{@{}l@{}}$\exp({\widetilde{O}(\sqrt{n})})$\\ (Folklore)\end{tabular}}         & {\renewcommand{\arraystretch}{1.05}\begin{tabular}[t]{@{}l@{}}$\widetilde{\Omega}(n^{3/2})$\\ \cite{LW19}\end{tabular}} & {\renewcommand{\arraystretch}{1.05}\begin{tabular}[t]{@{}l@{}}$\exp({\Omega(n^{1/4})})$ \\ (\Cref{thm:lower-tolerant}) \end{tabular}} \\ 

Juntas       & {\renewcommand{\arraystretch}{1.05}\begin{tabular}[t]{@{}l@{}}$\exp(O({k}))$ \\ \cite{DMN19}\end{tabular}} & {\renewcommand{\arraystretch}{1.05}\begin{tabular}[t]{@{}l@{}}$k^{\Omega(1)}$\\ \cite{LW19,ChenPatel23}\end{tabular}}             & {\renewcommand{\arraystretch}{1.05}\begin{tabular}[t]{@{}l@{}}$\exp({\Omega(\sqrt{k})})$\\ (\Cref{thm:lower-tolerant2}) \end{tabular}}              \\ 
\bottomrule
\end{tabular}\medskip
\caption{{A summary of tolerant property testing upper and lower bounds for non-adaptive\\ algorithms in the constant gap regime (i.e. $\eps_2 - \eps_1 = \Theta(1)$).}}
\label{tab:summary}

\end{table}

In this paper we give the first super-polynomial (in fact, mildly exponential) lower bounds for~non-adaptive tolerant testing of monotonicity, unateness and juntas when the separation $\eps_2-\eps_1$ between the ``yes'' and ``no'' cases is a constant independent of $n$.  (Equivalently, via the standard connection to distance estimation mentioned earlier, our results imply that a mildly-exponential number of queries are required to perform distance estimation even to constant additive accuracy.)  

In more detail, we prove the following main results:

 \begin{theorem} [Lower bounds on tolerant testers for monotonicity and unateness]
 \label{thm:lower-tolerant}
For any $\eps\in (0,1)$ with $\eps\ge 1/\sqrt{n}$, there exist $\eps_1,\eps_2>0$ with $\eps_1=\Theta(\eps)$ and $\eps_2-\eps_1=\Theta(\eps)$ such that 
any non-adaptive $(\eps_1,\eps_2)$-tolerant tester for monotonicity (or unateness) must make $\smash{2^{\Omega(n^{1/4}/\sqrt{\eps})}}$ queries. 
 \end{theorem}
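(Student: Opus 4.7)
The plan is to invoke Yao's minimax principle and construct two distributions $\Dyes$ and $\Dno$ over Boolean functions $\boldf:\zo^n\to\zo$ with the following properties: (a) $\boldf\sim\Dyes$ is $\eps_1$-close to some monotone (respectively, unate) function with high probability; (b) $\boldf\sim\Dno$ is $\eps_2$-far from every monotone (resp.\ unate) function with high probability; and (c) for every fixed non-adaptive query set $Q\subseteq \zo^n$ with $|Q|\le q:=2^{c\cdot n^{1/4}/\sqrt{\eps}}$, the induced distributions on the answer vector $(\boldf(x))_{x\in Q}$ satisfy $\dtv\!\le 1/3$. Items (a) and (b) furnish the yes/no gap and item (c) forces any $q$-query non-adaptive tester to fail.

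For the construction I would build on the $\Tal$/$\UTal$ framework underlying the standard $\tilde\Omega(\sqrt{n})$ non-adaptive monotonicity lower bound of \cite{CWX17stoc}: a random DNF with clauses of width $w \approx \sqrt{\eps\, n}$ and roughly $2^{w}$ clauses, so that each input typically activates $O(1)$ clauses and the function is maximally sensitive on a band near the middle layer. Since a single Talagrand DNF is already monotone, neither distribution can be a raw $\Tal$; instead I would mix in a second random ``layer'' of clauses whose orientations are flipped to produce anti-monotone bias. The parameters of the two layers would be tuned so that in $\Dyes$ the anti-monotone layer contributes only an $\eps_1$-fraction of the function values (correctable by rounding up to the monotone hull), while in $\Dno$ it is spread across a constant fraction of the domain in a balanced way that cannot be corrected by any single monotone (or shifted monotone) function. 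A crucial design constraint is that the marginals $\Pr[\boldf(x)=1]$ at every fixed $x$ should agree between $\Dyes$ and $\Dno$, so that single-query statistics carry no information. The unateness version follows by further conjugating the no-distribution with a random shift $\boldsymbol{r}\in\zo^n$ and then verifying, by a union bound over the $2^n$ candidate orientations together with concentration over the layer randomness, that no single $r$ can eliminate an $\Omega(\eps)$ fraction of violating edges.

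For part (a), closeness to monotone in $\Dyes$ would be witnessed by an explicit monotonization (e.g.\ replacing $\boldf$ by the indicator of the upward closure of $\boldf^{-1}(1)$ restricted to a suitable middle band), with a concentration argument showing the expected symmetric difference is $\Theta(\eps_1)\cdot 2^n$. For part (b), distance from monotone is proved by the standard matching argument: exhibit $\Omega(\eps_2)\cdot 2^n$ edge-disjoint violating pairs along coordinate directions, using the independence across clauses to ensure each pair is bichromatic with constant probability.

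The main obstacle, and the technical heart of the argument, is part (c). The strategy I would pursue is to bound $\dtv$ by a moment/correlation argument, controlling it via the probability that the query set $Q$ contains a ``witness configuration''---a small tuple of queries that jointly touch a common clause or layer-interaction pattern. Because each clause has width $w \approx \sqrt{\eps\, n}$, the probability that a fixed pair of queries $x,y\in Q$ shares a given clause is $\le 2^{-\Omega(w)}$, and more generally the probability that a $t$-tuple shares structure decays like $2^{-\Omega(tw)}$. The two-layer design ensures that the discrepancy between the $\Dyes$ and $\Dno$ answer distributions lies entirely in a configuration of order $t\approx n^{1/4}/\sqrt{\eps}$, so the TV distance is at most $\binom{q}{t}\cdot 2^{-\Omega(tw)} \le (q\cdot 2^{-\Omega(\sqrt{\eps\, n})})^{t}$, which is $o(1)$ precisely when $q\le 2^{c\cdot n^{1/4}/\sqrt{\eps}}$. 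The delicate part I expect to labor over is the combinatorial computation showing that the two layered constructions can be simultaneously tuned to (i) match on all moments up to this threshold and (ii) produce the claimed constant gap in distance to monotone/unate; getting both at once is what distinguishes tolerant testing lower bounds from their standard-testing counterparts.
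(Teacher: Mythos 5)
Your high-level shell (Yao, yes/no distributions, bound the probability of a ``distinguishing configuration'') is right, but the actual construction you sketch is not the one the paper uses, and the gap matters. The paper's key idea, missing from your proposal, is a \emph{control/action variable split}: $[n]$ is randomly partitioned into a large set $\bC$ of control variables and a small set $\bA$ of $a=\sqrt{n}/\eps$ action variables, and the Talagrand DNF lives \emph{only on the control cube}, where it serves purely as an addressing scheme. A point $x$ whose control bits $x_\bC$ uniquely satisfy term $\ell$ (and lie in a thin Hamming band) is routed to the ``action subcube'' $\ell$, where its value is set by one of four fixed functions $h^{(\pm,0/1)}$ on $\{0,1\}^\bA$ depending on a shared random bit $\bb_\ell$; the only difference between $\Dyes$ and $\Dno$ is whether the monotone pair $h^{(+,\cdot)}$ or the anti-monotone pair $h^{(-,\cdot)}$ is used. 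Your proposal instead puts a ``two-layer'' Talagrand DNF directly on the full domain with a flipped anti-monotone layer, which is a different object you never make precise, and in particular it is unclear how you would get a \emph{constant} distance to monotone out of it, since that constant in the paper comes specifically from the fact that a constant fraction of control-cube points are uniquely satisfied.

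The indistinguishability mechanism also differs in a way that breaks your parameter count. In the paper, distinguishing requires only a \emph{pair} of queries $x,y$ that land in the same action subcube ($S_\bT(x_\bC)=S_\bT(y_\bC)=\{\ell\}$) while straddling the band ($|x_\bA|>a/2+c_1\sqrt a$, $|y_\bA|<a/2-c_1\sqrt a$); the $n^{1/4}/\sqrt{\eps}$ exponent comes from a tradeoff over $t=|\{i: x_i=0, y_i=1\}|$ — small $t$ makes it unlikely that $\Omega(\sqrt a)$ of those flips land in a random $\bA$, while large $t$ makes it unlikely that $x,y$ still satisfy exactly the same unique term of the width-$(\sqrt{m}/\eps)$ Talagrand DNF — balanced at $t\approx n^{3/4}\sqrt\eps$, giving probability $2^{-\Omega(n^{1/4}/\sqrt\eps)}$ per pair and a union bound over $q^2$ pairs. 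Your sketch instead posits a $t\approx n^{1/4}/\sqrt\eps$-wise witness configuration and a TV bound $\binom{q}{t}2^{-\Omega(tw)}\le (q\,2^{-\Omega(\sqrt{\eps n})})^t$, but this expression is already $o(1)$ for all $t\ge1$ whenever $q\le 2^{o(\sqrt{\eps n})}$, so it does not actually localize the threshold at $2^{n^{1/4}/\sqrt\eps}$ as claimed. Relatedly, your proposed clause width $w\approx\sqrt{\eps n}$ is the wrong scale: the paper's terms have width $\sqrt{m}/\eps\approx\sqrt n/\eps$ (larger for smaller $\eps$), which is what makes the uniquely-satisfied fraction $\Theta(\eps)$ and hence gives the $\Theta(\eps)$ tolerant gap. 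In short, you have the right meta-strategy, but the control/action partition and the resulting pairwise $\Bad$-event tradeoff are the essential missing ingredients, and the moment-matching calculation you outline does not produce the stated exponent.
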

 
 \begin{theorem} [Lower bounds on tolerant testers for $k$-juntas]
 \label{thm:lower-tolerant2}
Any non-adaptive $(0.1,0.2)$-tolerant tester for the property of being a $k$-junta must make $\smash{2^{\Omega(k^{1/2})}}$ queries.
 \end{theorem}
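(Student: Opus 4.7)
The plan is to invoke Yao's principle and construct two distributions $\Dyes$ and $\Dno$ over functions $\zo^n\to\zo$ such that: (i) $\fyes\sim\Dyes$ is a $k$-junta almost surely (and is therefore $0.1$-close to one); (ii) $\fno\sim\Dno$ is $0.2$-far from every $k$-junta with probability at least $0.9$; and (iii) for every fixed set $Q$ of $q\le 2^{c\sqrt{k}}$ query points, the induced views $(\fyes(x))_{x\in Q}$ and $(\fno(x))_{x\in Q}$ are within total variation distance $0.1$. These three items together yield the claimed query lower bound against any non-adaptive, two-sided $(0.1,0.2)$-tolerant tester.

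I would base both distributions on the Talagrand-style random DNF template suggested by the macros $\Tal$ and $\UTal$ in the preamble. For $\Dyes$, sample a uniformly random $S\subset[n]$ of size $k$, then sample $m=2^{\sqrt{k}}$ independent random monotone conjunctions $C_1,\ldots,C_m$, each of width $\sqrt{k}$ with variables drawn uniformly from $S$, and set $\fyes:=\bigvee_j C_j$. For $\Dno$, do the same but with $|S|=K=(1+\delta)k$ for a small constant $\delta>0$. Item (i) is immediate, since $\fyes$ literally depends only on the $k$ variables in $S$. For (ii), standard Talagrand-type computations show that every coordinate in the support of $\fno$ has $\Omega(1)$ individual influence; any candidate $k$-junta $g$ must therefore ignore at least $\delta k$ relevant coordinates of $\fno$, and a careful summation (together with a union bound over the $\binom{n}{k}$ possible junta supports) shows $\fno$ is at Hamming distance at least $0.2$ from every $k$-junta with high probability.

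The heart of the proof, and the main obstacle, is (iii). For a given $Q$, the view under either distribution is determined by the $q\times m$ ``firing matrix'' $B_{i,j}=C_j(x^{(i)})$. For a single random width-$\sqrt{k}$ conjunction $C$, the induced distribution on a column of $B$ differs between the $|S|=k$ and $|S|=K$ cases by an additive TV correction of $O(1/\sqrt{k})$ per term, stemming from the slightly different marginal probabilities that the term fires jointly on multiple queries (which depend on how the term's random $\sqrt{k}$ variables intersect the set of coordinates where the queries agree). A naive union bound over the $m=2^{\sqrt{k}}$ terms is catastrophically too large; the key insight is that only $\poly(q)$ terms fire on any point in $Q$ with overwhelming probability, so one must couple the two distributions term-by-term in a way that only charges TV cost to ``active'' (firing) terms. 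Implementing this via a hybrid argument, and bounding the probability of collision events (two queries witnessed by the same term, or two distinct terms sharing variables on $Q$) by $\poly(q)\cdot 2^{-\sqrt{k}}$, should yield a total TV bound of $\poly(q)\cdot 2^{-\Omega(\sqrt{k})}$, which is at most $0.1$ provided $q\le 2^{c\sqrt{k}}$ for a small enough $c>0$. Making this subtle coupling rigorous --- in particular, arguing that the per-term TV cost is effectively only paid on firing terms, and that the dependence between different terms through the shared random set $S$ (or $S'$) does not blow up the TV distance --- is the crux of the argument.
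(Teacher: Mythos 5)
Your approach is genuinely different from the paper's, and it has serious gaps. The paper deliberately \emph{avoids} any Talagrand-style construction for the junta bound. It instead samples a uniformly random subset $\bm{A}\subset[n]$ of size $a=n/2$, sets $\bm{C}=[n]\setminus\bm{A}$, and draws a \emph{uniformly random} Boolean function $\bm{b}:\{0,1\}^{\bm C}\to\{0,1\}$; each assignment to the control bits then indexes its own action subcube, on which $h^{(\pm,0/1)}$ (the same top-band/middle-band/bottom-band gadgets used for monotonicity) are applied. The $\fyes$ functions are $0.1$-close to the $(n/2)$-junta $b(x_C)$, the $\fno$ functions are $0.2$-far from every $(n/2)$-junta because every control-direction is bichromatic on a constant fraction of the cube, and indistinguishability reduces to the observation that two queries can only be informative if $x_{\bm C}=y_{\bm C}$ while $x_{\bm A},y_{\bm A}$ are $\Omega(\sqrt{a})$ apart, which for a random partition $\bm A$ happens with probability $2^{-\Omega(\sqrt{n})}$ per pair, giving the bound by a union bound. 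This is much simpler than what you propose.

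Beyond being more complicated, your plan has at least two concrete problems. First, in step (ii) you assert that ``every coordinate in the support of $\fno$ has $\Omega(1)$ individual influence.'' This is false for a Talagrand-type DNF: the total influence of $\bigvee_{j\le m}C_j$ with $m=2^{\sqrt{k}}$ terms of width $\sqrt{k}$ is $\Theta(\sqrt{K})$ (it is a monotone function, so this is essentially the maximum possible), hence the per-coordinate influence is $\Theta(1/\sqrt{K})$, not $\Omega(1)$. Consequently the argument ``a $k$-junta must ignore $\delta k$ relevant coordinates, each of $\Omega(1)$ influence, so it is $\Omega(1)$-far'' collapses. Lower-bounding the distance from a Talagrand DNF on $(1+\delta)k$ variables to the nearest $k$-junta is a genuinely delicate question (restricting to $J$ with $x_{J^c}\equiv 0$ or $\equiv 1$ both give $\Theta(1)$-far functions, but the optimal $J$-junta is a per-$x_J$ majority over $y_{J^c}$ and neither of these naive bounds controls it), and nothing in your sketch addresses it. Second, in step (iii) the ``only charge TV cost to firing terms'' idea is not sound as stated: a term \emph{not} firing on any query is also (slightly) informative, since the marginal non-firing probability depends on whether $|S|=k$ or $|S|=(1+\delta)k$, and there are $2^{\sqrt{k}}$ such terms. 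If one naively pays $O(1/\sqrt{k})$ TV per term, the total is $2^{\sqrt{k}}/\sqrt{k}$, which is useless. To make the hybrid work you would need to exhibit an explicit coupling between the ``$|S|=k$'' and ``$|S|=(1+\delta)k$'' processes under which the views agree except on a low-probability event, and nothing like such a coupling is described; the shared randomness of $S$ across all $2^{\sqrt{k}}$ terms makes this non-trivial. In short, the paper's argument is considerably simpler and avoids all of these obstacles, and your proposal as written does not constitute a proof.
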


These results dramatically narrow the gap between lower and upper bounds when $\eps_2-\eps_1$ is a constant.  Indeed, since as noted earlier there are known $2^{\tilde{O}(n^{1/2})}$-query non-adaptive monotonicity /\hspace{0.04cm}unateness testers (based on learning \cite{BshoutyTamon:96,KKMS:08,FKV17}) and there is a known $2^{O(k)}$-query non-adaptive $k$-junta tester \cite{DMN19} in this setting, our lower bounds are off from the best possible algorithms by at most a quadratic factor in the exponent.  It is an intriguing goal for future work to locate the optimal non-adaptive query complexity of tolerant monotonicity\hspace{0.04cm}/\hspace{0.04cm}unateness testing in the range between $\smash{2^{n^{1/4}}}$ and $\smash{2^{\sqrt{n}}}$, and likewise for junta testing in the range between $\smash{2^{\sqrt{k}}}$ and $\smash{2^{k}}$.

\subsection{Techniques}

The central ingredient of our improved lower bounds for monotonicity and unateness testing is a refinement and  strengthening of a construction from \cite{PRW22}.  \cite{PRW22} consider distributions $\Dyes$ and $\Dno$ over $n$-variable ``yes''- and ``no''-functions, each of which involves a random partition of the $n$ input variables into a set of $n/2$ ``control variables'' and a complementary set of $n/2$ ``action variables.''  Intuitively, the interesting $n$-bit inputs for both ``yes''- and ``no''-functions are ones for which the $n/2$ control bits have $n/4$ coordinates set to 0 and $n/4$ coordinates set to 1; on inputs of this sort, the setting of the action variables determines the value of the function (a useful way to think of this is that each distinct input of this sort results in a different ``action subcube'' determining the value of the function).  The values of the function on the action subcubes are carefully defined in such a way as to make it impossible for an algorithm to distinguish ``yes''-functions from ``no''-functions unless two inputs $x,x'$ are queried which lie in the same action subcube (so the setting of the $n/2$ control bits is the same between the two inputs $x$ and $x'$), but differ in many of the $n/2$ action variables. Very roughly speaking, the non-adaptive lower bounds of \cite{PRW22} follow from the fact that since the partition of $[n]$ into control variables and action variables is random and unknown, it is very unlikely for any two input strings which are far apart (which they must be in order for the action variables to differ in many locations as sketched above) to differ only in the $n/2$ action variables while being completely identical in the $n/2$ control variables.

It turns out that the analysis of \cite{PRW22} is not able to handle ``no''-functions that are more than $\Theta(1/\sqrt{n})$-far from monotone/unate/junta essentially because the action bits are only consulted in their construction if the $n/2$ control bits are set in a perfectly balanced way  (intuitively, this is because the middle layer of the $n/2$-dimensional hypercube has a $\Theta(1/\sqrt{n})$-fraction of all points).  Our key insight is to change the \cite{PRW22} construction by using (a small extension of) a construction of random monotone DNF formulas due to Talagrand \cite{Talagrand:96}. The purpose of the extension is to let us handle general values of $\eps_2 \geq 1/\sqrt{n}$, but for simplicity we restrict the following discussion to the case that $\eps_2$ is a constant, in which case our construction coincides with Talagrand's.  

Talagrand's random DNF is a $2^{\sqrt{m}}$-term monotone DNF formula over $\zo^m$ (for us $m$ will be the number of control variables) which has the property that a \emph{constant} fraction of points in $\zo^m$ satisfy a \emph{unique} term.  
Like the points in the middle layer of the control subcube, these uniquely-satisfied points have the property that any pair of points $x,x'$ which respectively satisfy two distinct unique terms $T_i,T_{i'}$ must be incomparable to each other, i.e.~they neither satisfy $x \leq x'$ nor $x' \leq x$; this turns out to be essential for us due to the way that the ``yes'' and ``no'' functions are defined within the action subcubes. Intuitively, we use these ``uniquely-satisfied'' points in $\zo^m$ in place of the ``middle layer'' of the control subcube, and rather than having each middle-layer point map to a distinct action subcube we have all points which uniquely satisfy a given term $T_i$ map to the \emph{same} action subcube.

The fact that a constant fraction of points in the control subcube satisfy a unique  term in Talagrand's random DNF is ultimately why we are able to handle constant values of $\eps_1,\eps_2$.  Our construction differs in some other regards from the \cite{PRW22} construction as well;  to optimally balance parameters it turns out to be best for us to have many fewer action variables than control variables, and our $2^{\Omega(n^{1/4}/\sqrt{\eps})}$ query lower bounds ultimately come from trading off constraints which arise from our use of the Talagrand DNF rather than the ``middle layer'' as in \cite{PRW22}.

Our improved lower bound for junta testing follows a similar high-level approach, but the technical details are simpler; it turns out that in this case, since we are not concerned with monotonicity there is no need to use Talagrand DNF, and instead we  use a straightforward indexing scheme in which each assignment to the control bits indexes a different action subcube.  (Intuitively, avoiding the use of the Talagrand DNF and the resulting tradeoff which it necessitates is why we are able to achieve an exponent of $1/2$ for juntas rather than the $1/4$ that we achieve for monotonicity and unateness.) 

Finally, it is natural to wonder whether using some other function in place of the Talagrand DNF might lead to an improved tradeoff, and hence a quantitatively better lower bound, for monotonicity or unateness testing. In \Cref{sec:barriers}, we argue that the Talagrand function is in fact optimal for our approach, so improving on our lower bounds for monotonicity or unateness will require a different construction.

\section{Preliminaries} \label{sec:prelims}
For an integer $n$, we use $[n]$ to denote the set $\{1,\ldots,n\}$. For two integers $n_1\leq n_2$, we use $[n_1:n_2]$ to denote $\{n_1,\ldots,n_2\}$. 

We will denote the $0/1$-indicator of an event $A$ by $\mathbf{1}\cbra{A}$. All probabilities and expectations will be with respect to the uniform distribution over the relevant domain unless stated otherwise. We use boldfaced letters such as $\bx, \boldf$, and $\bA$ to denote random variables (which may be real-valued, vector-valued, function-valued, or set-valued; the intended type will be clear from the context).~We write $\bx \sim \calD$ to indicate that the random variable $\bx$ is distributed according to distribution $\calD$. 

\begin{notation} \label{not:aaa}
	Given a string $x\in\zo^n$ and a set $A\sse[n]$, we write $x_A \in \zo^{A}$ to denote the $|A|$-bit string obtained by restricting $x$ to coordinates in $A$, i.e. $x_A := (x_i)_{i\in A}$.

	Given two strings $x,y \in \zo^n$, we write $x \leq y$ to indicate that   $x_i \leq y_i$ for all $i$; if moreover $x \neq y$ then we may write $x < y$.
\end{notation}




Given two Boolean functions $f,g\isazofunc$, we define the \emph{distance} between $f$ and $g$ (denoted by $\dist(f,g)$) to be the normalized Hamming distance between $f$ and $g$, i.e. 
\[\dist(f,g) := \Prx_{\bx\sim\zo^n}\sbra{f(\bx)\neq g(\bx)}.\]
A \emph{property} $\calP$ is a collection of Boolean functions; we say that a function $f\isazofunc$ is \emph{$\eps$-far from the property $\calP$} if $f$ satisfies
\[\dist(f,\calP) := \min_{g\in\calP} \dist(f,g)\geq \epsilon.\]
We say $f$ is $\eps$-close from $\calP$ if $\dist(f,\calP)\le \eps$.

We recall the following basic fact about the middle layers of the hypercube $\zo^n$:

\begin{fact}\label{fact1}
We have
$$
\frac{1}{4\sqrt{n}}\le \frac{{n\choose n/2+\ell}}{2^n}\le \frac{1}{\sqrt{n}},\quad
\text{for every integer $\ell$ with $|\ell|\le 0.1\sqrt{n}.$}
$$
\end{fact}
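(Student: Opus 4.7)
The plan is to derive both bounds from Stirling's formula applied to the central binomial coefficient, together with a simple product estimate showing that shifting the middle index by up to $0.1\sqrt{n}$ only changes the coefficient by a bounded multiplicative factor.

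First I would recall the standard (non-asymptotic) form of Stirling's formula, which yields
\[
\sqrt{\frac{2}{\pi n}}\cdot e^{-1/(6n)} \;\le\; \frac{\binom{n}{n/2}}{2^{n}} \;\le\; \sqrt{\frac{2}{\pi n}}
\]
for all even $n\geq 2$ (one may treat $n/2$ as $\lfloor n/2\rfloor$ for odd $n$ with the same resulting bounds up to an absolute constant). The upper bound in Fact~\ref{fact1} is then immediate: since $\binom{n}{n/2+\ell}\le \binom{n}{n/2}$ for every integer $\ell$, we get $\binom{n}{n/2+\ell}/2^n \le \sqrt{2/(\pi n)} \le 1/\sqrt{n}$ using $\sqrt{2/\pi}<1$.

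For the lower bound, the key step is to compare $\binom{n}{n/2+\ell}$ to the central coefficient via the telescoping product
\[
\frac{\binom{n}{n/2+\ell}}{\binom{n}{n/2}} \;=\; \prod_{j=1}^{|\ell|} \frac{n/2-j+1}{n/2+j}
\]
(and symmetrically for $\ell<0$). Taking logarithms and using $\log(1-x)\ge -x-x^2$ valid for $x\in[0,1/2]$, the sum of log-ratios is at least $-\sum_{j=1}^{|\ell|}O(j/n) = -O(\ell^2/n)$. For $|\ell|\le 0.1\sqrt{n}$ this is at least $-c$ for a small absolute constant $c$, so the ratio is at least some absolute constant $\gamma>0$. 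Combining with the Stirling lower bound on $\binom{n}{n/2}/2^n$ yields
\[
\frac{\binom{n}{n/2+\ell}}{2^n} \;\ge\; \gamma\cdot\sqrt{\frac{2}{\pi n}}\cdot e^{-1/(6n)} \;\ge\; \frac{1}{4\sqrt{n}},
\]
where the final inequality follows by choosing the constants so that $\gamma\sqrt{2/\pi}\cdot e^{-1/(6n)} \ge 1/4$; this holds once $n$ is at least some absolute constant (tiny $n$ can be checked by direct computation).

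There is no real obstacle here — the entire argument is Stirling plus an elementary product estimate. The only care needed is bookkeeping: verifying that $0.02\lesssim \ell^2/n\lesssim 0.01$-scale slack in the exponent of the ratio leaves enough room so that the resulting constant exceeds $\sqrt{\pi/2}/4$, which is what Fact~\ref{fact1} requires.
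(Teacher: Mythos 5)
Your proof is correct. The paper does not actually supply a proof of \Cref{fact1} — it is stated as a standard fact about the middle binomial coefficients — and your argument (Stirling's bound on $\binom{n}{n/2}/2^n$, plus the telescoping product showing that a shift of at most $0.1\sqrt{n}$ from the center decreases the coefficient by at most a constant factor, since $\ell^2/n\le 0.01$ keeps the exponent tiny) is exactly the standard route one would take. The numerical slack is comfortable: $\sqrt{2/\pi}\approx 0.80$ against the required $1/4$ on the low side and $1$ on the high side, so no careful constant-chasing is needed beyond what you sketch.
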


\subsection{Lower Bounds for Testing Algorithms}
\label{subsec:testing-lb-prelims}

Our query-complexity lower bounds for tolerant testing algorithms are obtained via Yao's minimax principle~\cite{Yao:77}, which we recall below.
(We remind the reader that an algorithm for the problem of $(\eps_1,\eps_2)$-tolerant testing is correct on an input function $f$ provided that it outputs ``yes'' if $f$ is $\eps_1$-close to the property and outputs ``no'' if $f$ is $\eps_2$-far from the property; if the distance to the property is between $\eps_1$ and $\eps_2$ then the algorithm is correct regardless of what it outputs.)

%

\begin{theorem}[Yao's principle] \label{thm:yao-minimax}

	To prove a $q$-query lower bound on the worst-case query complexity of any non-adaptive {randomized} testing algorithm, it suffices to give a distribution $\calD$ on instances
	such that for any $q$-query non-adaptive \emph{deterministic} algorithm $\calA$, we have 
	\[\Prx_{\bm{f}\sim \calD}\big[\calA\text{ is correct on }\bm{f}\big]\leq c,\] 
	where $0\leq c<1$ is a universal constant. 
\end{theorem}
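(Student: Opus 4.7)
The plan is to prove the contrapositive: assuming there exists a non-adaptive randomized algorithm $\calR$ making at most $q$ queries that is correct with probability strictly greater than $c$ on \emph{every} input, I will produce, for any distribution $\calD$ on instances, a deterministic non-adaptive $q$-query algorithm $\calA^*$ satisfying $\Prx_{\bm{f}\sim \calD}[\calA^*\text{ correct on }\bm{f}] > c$. This directly contradicts the hypothesis of the theorem, so no such $\calR$ can exist, and the desired $q$-query lower bound on worst-case randomized query complexity follows.

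The first step is the standard identification of a randomized algorithm with a distribution over deterministic ones. Each realization of $\calR$'s internal random coins yields a non-adaptive deterministic query algorithm whose query budget is still at most $q$, so the distribution over coin tosses induces a distribution $\mu$ over non-adaptive deterministic $q$-query algorithms $\bm{\calA}$. By construction, for every fixed input $f$ the success probability of $\calR$ on $f$ is exactly $\Prx_{\bm{\calA}\sim \mu}[\bm{\calA}\text{ correct on }f]$, which by the standing assumption exceeds $c$.

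The second step is a one-line Fubini argument on the product distribution of $\mu$ and $\calD$:
\[
\Ex_{\bm{\calA}\sim \mu}\sbra{\Prx_{\bm{f}\sim \calD}\sbra{\bm{\calA}\text{ correct on }\bm{f}}} \;=\; \Ex_{\bm{f}\sim \calD}\sbra{\Prx_{\bm{\calA}\sim \mu}\sbra{\bm{\calA}\text{ correct on }\bm{f}}} \;>\; c,
\]
where the equality swaps the two expectations and the strict inequality applies the per-input success guarantee pointwise in $\bm{f}$. Since the left-hand side is an average of numbers in $[0,1]$ indexed by deterministic algorithms in the support of $\mu$, there must be at least one $\calA^*$ in the support of $\mu$ satisfying $\Prx_{\bm{f}\sim \calD}[\calA^*\text{ correct on }\bm{f}] > c$, as needed.

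I do not anticipate any genuine obstacle; the argument is entirely a change of order of expectation. The only point requiring care is that ``correct on $\bm{f}$'' is well-defined in the tolerant testing framework: as the parenthetical reminder before the theorem emphasizes, any algorithm is automatically deemed correct on inputs whose distance to the property lies strictly between $\eps_1$ and $\eps_2$, and the event $\{\bm{\calA}\text{ correct on }\bm{f}\}$ is measurable in $(\bm{\calA},\bm{f})$ because the algorithm only reads finitely many coordinates of $\bm{f}$. This suffices for the Fubini interchange, and the rest of the proof is immediate.
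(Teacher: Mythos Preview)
Your proof is correct and is the standard averaging argument for the easy direction of Yao's minimax principle. Note, however, that the paper does not actually prove this theorem: it is stated as a known result with a citation to \cite{Yao:77}, so there is no paper-proof to compare against. Your argument is exactly the expected one.
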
 

\section{Talagrand's Random DNF} 
\label{subsec:kane-and-talagrand}


We define a useful distribution over Boolean functions that will play a central role in the proofs of our tolerant lower bound for monotonicity and unateness. The construction is a slight generalization of a distribution over DNF (disjunctive normal form) formulas that was constructed by Talagrand~\cite{Talagrand:96}. (The generalization is that we allow a parameter $\eps$ to control the size of each term and the number of terms; the original construction corresponds to $\eps=1$.)


\begin{definition}[Talagrand's random DNF] \label{def:talagrand}
Let $\eps\in (0,1)$ and let $L\coloneqq 0.1\cdot2^{\sqrt{n}/\epsilon}$. Let 
$\Tal(n, \epsilon)$ 
be the following distribution on ordered tuples of $L$ monotone terms: for each $i=1,\dots,L$, the $i$-th term is obtained by independently drawing a set $\bT_i\sse [n]$ where each set $\bT_i$ is obtained by drawing $\sqrt{n}/\eps$ elements of $[n]$ independently and with replacement. 
We use $\bm{T}$ to denote the ordered tuple $\bm{T}=(\bm{T}_1,\cdots,\bm{T}_L)$ which is a draw from $\Tal(n, \epsilon)$.
		Then a ``Talagrand DNF'' is given by 
		\[\boldf(x) = \bigvee_{\ell=1}^{L} \pbra{\bigwedge_{j\in \bT_{\ell}} x_{j}}.\]
\end{definition}

It is clear that any Talagrand DNF obtained by a draw from $\Tal(n, \epsilon)$ is a monotone function. 

We will frequently view $T_{i}\subseteq [n]$ as the term $\bigwedge_{j\in T_{i}} x_{j}$, where we say $T_i(x)=1$ if and only if $x_{j}=1$ for all $j\in T_{i}$. We may also write $T=(T_1,\cdots,T_k)$ to represent a DNF, which is defined by the disjunction of the terms $T_{i}$. We will often be interested in the probability of a random input $\bx\sim\zo^n$ satisfying a unique term $\bT_i$ in a Talagrand DNF; towards this, we introduce the following notation:

\begin{notation} \label{notation:term-count}
	Given a DNF $T=(T_1,\cdots,T_k)$ where each $T_i$ is a term, we define the collection of \emph{terms of $T$ satisfied by $x$}, written $S_T(x)$, as 
	\[S_T(x) := \cbra{\ell\in [k] : T_\ell(x) = 1}.\]
\end{notation}

The following claim shows that on average over the draw of $\bm{T}\sim\Tal(n,\eps)$, an $\Omega(\eps)$ fraction of strings from $\zo^n$ satisfy a \emph{unique} term in the Talagrand DNF (i.e. $|S_{\bm{T}}(x)| = 1$ for $\Omega(\eps)$-fraction of $x\in\zo^n$). We note that an elegant argument of Kane~\cite{Kane13monotonejunta} gives this for $\eps=\Theta(1)$, but this argument does not extend to the setting of small $\eps$ which we require. The  proof below is based on an argument due to O'Donnell and Wimmer (cf. Theorem~2.1 of \cite{ODonnell2007}).

\begin{proposition} \label{prop:talagrand-unique-property}
For $\eps\in (0,1)$, let $\bm{T}\sim\Tal(n,\eps)$ be as in \Cref{def:talagrand}. Then 
	\[\Prx_{\bm{T}, \bx}\sbra{|S_{\bm{T}}(\bx)| = 1} = \Omega\left(\max \{\eps,1/\sqrt{n}\}\right).\]
\end{proposition}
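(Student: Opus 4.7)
The plan is to evaluate $\Prx_{\bm{T},\bx}\sbra{|S_{\bm{T}}(\bx)|=1}$ by fixing $\bx$ of Hamming weight $w$, computing the expectation over $\bm{T}$ exactly, and then summing over weight levels close to $n/2$ using \Cref{fact1}. For fixed $x$ with $|x|=w$, since each $\bT_i$ is a multiset of $\sqrt n/\eps$ indices drawn independently and uniformly with replacement from $[n]$, the probability that a single random term is satisfied by $x$ equals $p(w) := (w/n)^{\sqrt n/\eps}$ (every draw must land in the $w$-element support of $x$). The $L$ terms are mutually independent, so $|S_{\bm{T}}(x)|$ is $\mathrm{Binomial}(L,p(w))$ and
\[
\Prx_{\bm{T}}\sbra{|S_{\bm{T}}(x)|=1} \;=\; L\,p(w)(1-p(w))^{L-1} \;=\; \mu(w)\,e^{-\mu(w)}\bigl(1+o(1)\bigr),
\]
where $\mu(w) := L\,p(w) = 0.1\cdot(2w/n)^{\sqrt n/\eps}$; the last equality uses that $p(w) = \mu(w)/L$ is tiny whenever $\mu(w)=O(1)$, since $L$ is exponentially large. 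Hence the per-$x$ probability is $\Omega(1)$ whenever $\mu(|x|)$ lies in any fixed bounded interval bounded away from $0$.

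Next, I would locate such $x$'s. Set $J := \min\!\pbra{0.1\sqrt n,\,\eps\sqrt n}$ and consider $w = n/2 + j$ for $0 \le j \le J$. A Taylor expansion (valid because $2j/n \le 0.2/\sqrt n \to 0$) gives
\[
\ln\mu(w) \;=\; \ln(0.1) + \frac{\sqrt n}{\eps}\ln\!\pbra{1 + \frac{2j}{n}} \;=\; \ln(0.1) + \frac{2j}{\eps\sqrt n}\bigl(1+o(1)\bigr),
\]
and since $2J/(\eps\sqrt n)\le 2$ in either case of the $\min$, we obtain $\mu(w)\in[0.1,\,1]$ throughout the range (for $n$ large). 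Thus each such $x$ contributes per-$x$ probability $\Omega(1)$.

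Finally, I would count and combine. Since $|j|\le J\le 0.1\sqrt n$, \Cref{fact1} applies and each integer weight $w\in[n/2,\,n/2+J]$ carries mass $\binom{n}{w}/2^n \ge 1/(4\sqrt n)$; there are $\lfloor J\rfloor+1 \ge \max(1,J)$ such weights, giving
\[
\Prx_{\bm{T},\bx}\sbra{|S_{\bm{T}}(\bx)|=1} \;=\; \Omega\!\pbra{\tfrac{\max(1,\,J)}{\sqrt n}}.
\]
When $\eps \ge 1/\sqrt n$, using $\eps\le 1$ we have $J = \Omega(\eps\sqrt n)$, which yields $\Omega(\eps)$; when $\eps < 1/\sqrt n$, the single weight $w = n/2$ alone contributes $\Omega(1/\sqrt n)$. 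Combining gives the claimed $\Omega(\max\{\eps,1/\sqrt n\})$ bound. The main obstacle is calibrating $J$ so that it simultaneously stays inside \Cref{fact1}'s middle-layer regime $|j|\le 0.1\sqrt n$, keeps $\mu(w)=O(1)$ so the per-$x$ probability remains $\Omega(1)$, and contains $\Omega(\eps\sqrt n)$ integer weights when $\eps\ge 1/\sqrt n$; the definition $J=\min(0.1\sqrt n,\,\eps\sqrt n)$ is tailored for this three-way balance.
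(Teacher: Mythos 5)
Your proof is correct and the overall structure matches the paper's (fix $x$ near the middle, show the per-$x$ probability of unique satisfaction is $\Omega(1)$, then sum over weight levels), but the per-$x$ estimate is done differently. The paper works with two one-sided bounds: it upper-bounds $\Ex_{\bT}[|S_{\bT}(x)|]$ by roughly $0.12$, applies Markov to get $\Prx_{\bT}[|S_{\bT}(x)| \geq 2] \leq 0.06$, and separately bounds $\Prx_{\bT}[|S_{\bT}(x)| = 0] \leq e^{-0.1}$, subtracting both from $1$. You instead observe that for fixed $x$ with $|x|=w$, the $L$ terms are i.i.d. and $|S_{\bT}(x)| \sim \mathrm{Binomial}(L, p(w))$ with $p(w) = (w/n)^{\sqrt{n}/\eps}$, so $\Prx_{\bT}[|S_{\bT}(x)|=1] = L p(w)(1-p(w))^{L-1} \approx \mu(w) e^{-\mu(w)}$ \emph{exactly}, and then calibrate the weight window so that $\mu(w) = 0.1 (2w/n)^{\sqrt n/\eps}$ stays in a bounded interval away from $0$. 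The exact-binomial route is arguably more transparent (it makes explicit \emph{why} $L = 0.1\cdot 2^{\sqrt n/\eps}$ is the right normalization: it puts the Poisson mean $\mu$ at roughly $0.1$ on the middle layer), whereas the paper's Markov-plus-union approach needs fewer asymptotic approximations and does not require discussing the Poissonization step $(1-p)^{L-1}\to e^{-\mu}$. One small caveat in your write-up: when $J = \eps\sqrt n < 1$ and $n$ is odd, there may be no integer weight in $[n/2, n/2+J]$, so the $\max(1,J)$ floor does not automatically produce a contributing weight level; the paper has the analogous issue (its footnote implicitly assumes $n$ even for $|x|=n/2$), and both are easily patched by taking $w=\lceil n/2\rceil$, but you should say so explicitly.
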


\begin{proof}
	Note that \Cref{prop:talagrand-unique-property} is immediate if the following holds: For every string $x\in\zo^n$ with $|x| \in [n/2, n/2 + 0.05\eps\sqrt{n}]$,\footnote{Note that when $0.05\eps\sqrt{n}<1$, only strings $x\in \{0,1\}^n$ with $|x|=n/2$ are considered.} we have 
	\begin{equation} \label{eq:tal-goal}
		\Prx_{\bm{T}}\sbra{|S_{\bm{T}}(x)| = 1} = \Omega(1).
	\end{equation}
	This is because a straightforward application of the Chernoff bound, and the well-known middle binomial coefficient bound ${n \choose n/2}/2^n = \Theta(1/\sqrt{n})$, together imply that 
	\[\Prx_{\bx}\sbra{|x| \in [n/2, n/2 + 0.05\eps\sqrt{n}]} = \Omega\left(\max\{\epsilon,1/\sqrt{n}\}\right).\]
	
We prove \Cref{eq:tal-goal} in the rest of the proof.
Fix such an $x\in\zo^n$ and let $\bT_i$ be one of the $0.1\cdot2^{\sqrt{n}/\epsilon}$ terms of $\bm{T}\sim\Tal(n,\eps)$. Recalling that $\bT_i$ consists of $\sqrt{n}/\eps$ many variables (with repetition), we have 

	\begin{align*}
		\Prx_{\bT_i}\sbra{\bT_i(x) = 1} &\leq \pbra{\frac{1}{2} + \frac{0.05\epsilon}{\sqrt{n}}}^{\sqrt{n}/\epsilon} =\pbra{\frac{1}{2} + \frac{0.1\epsilon}{2\sqrt{n}}}^{\sqrt{n}/\epsilon} \leq 2^{-\sqrt{n}/\epsilon}\exp\pbra{0.1}.
	\end{align*}
	where in the second inequality we used the fact that $1+x\leq e^x$ for all $x\in\R$. It follows by the linearity of expectation that for any $x$ as above, we have
	\[\Ex_{\bm{T}}\sbra{|S_{\bm{T}}(x)|} = \sum_{i=1}^{0.1\cdot2^{\sqrt{n}/\epsilon}} \Prx_{\bT_i}\sbra{\bT_i(x) = 1} \leq 0.1\cdot\exp(0.1) < 0.12.\]
	Markov's inequality then implies that 
	\begin{equation} \label{eq:tal-proof-markov}
		\Prx_{\bm{T}}\sbra{|S_{\bm{T}}(x)| \geq 2}\leq \frac{1}{2}\cdot\Ex_{\bm{T}}\sbra{|S_{\bm{T}}(x)|} \leq 0.06.
	\end{equation}
On the other hand, since $|x| \geq n/2$, we have 

	\begin{align} 
		\Prx_{\bm{T}}\sbra{|S_{\bm{T}}(x)| = 0} 
		\le \left(1-\left(\frac{1}{2}\right)^{\sqrt{n}/\epsilon}\right)^{0.1\cdot 2^{\sqrt{n}/\epsilon}}
		\le \exp(-0.1)<0.91,\label{eq:tal-proof-unsat-prob}
	\end{align}
	where the second inequality used again $1+x\le e^x$.
	Combining \Cref{eq:tal-proof-markov,eq:tal-proof-unsat-prob}, we get that 
	\[\Prx_{\bm{T}}\sbra{|S_{\bm{T}}(x)| = 1} >0.03,\]
	thus establishing \Cref{eq:tal-goal}, which in turn completes the proof.
\end{proof}

\section{Lower Bounds on Tolerant Testers for Monotonicity and Unateness}
\label{sec:mono}

\def\Dy{\Dyes}
\def\Dn{\Dno}
\def\ManyOne{\textsf{ManyOne}}
\def\GoodTalagrand{\textsf{GoodTalagrand}}
\def\Bad{\textsf{Bad}}

\def\Talt{{\Tal}^*}
\def\nil{\textsf{nil}}


We start with some objects that we need in the construction of the two distributions $\Dyes$ and $\Dno$.

Let $\epsilon\in (0,1)$ be a parameter with $\epsilon\ge c_0/\sqrt{n}$ for some sufficiently large constant $c_0$. We partition the variables $x_1,\cdots,x_n$ into \emph{control} variables and \emph{action} variables as follows: Let $a\coloneqq \sqrt{n}/\eps$ and ${A}\subseteq [n]$ be a fixed subset of $[n]$ of size $a$. Let $C\coloneqq [n]\setminus A$. We refer to the variables $x_i$ for $i\in C$ as \emph{control variables} and the variables $x_i$ for $i\in A$ as \emph{action variables}. We first define two pairs of functions over $\{0,1\}^A$ on the action variables as follows (we will use these functions later in the definition of $\Dy$ and $\Dn$):
Let $h^{(+,0)},h^{(+,1)},h^{(-,0)}$ and $h^{(-,1)}$ be Boolean functions over $\{0,1\}^A$ defined as follows:
\begin{equation*}
h^{(+,0)}(x_A) =
	\begin{cases}
       \ 0 & |x_A|>\frac{a}{2}+c_1\sqrt{a}; \\[0.5ex]
       \ 0 & |x_A|\in[\frac{a}{2}-c_1\sqrt{a},\frac{a}{2}+c_1\sqrt{a}]; \\[0.5ex]
       \ 0 & |x_A|<\frac{a}{2}-c_1\sqrt{a}.
    \end{cases}
\quad h^{(+,1)}(x_A) =
	\begin{cases}
       \ 1 & |x_A|>\frac{a}{2}+c_1\sqrt{a}; \\[0.5ex]
       \ 0 & |x_A|\in[\frac{a}{2}-c_1\sqrt{a}, \frac{a}{2}+c_1\sqrt{a}]; \\[0.5ex]
       \ 1 & |x_A|<\frac{a}{2}-c_1\sqrt{a}.
    \end{cases}
\end{equation*}
and
\begin{equation*}
h^{(-,0)}(x_A) =
	\begin{cases}
       \ 1 & |x_A|>\frac{a}{2}+c_1\sqrt{a}; \\[0.5ex]
       \ 0 & |x_A|\in[\frac{a}{2}-c_1\sqrt{a},\frac{a}{2}+c_1\sqrt{a}]; \\[0.5ex]
       \ 0 & |x_A|<\frac{a}{2}-c_1\sqrt{a}.
    \end{cases}
\quad h^{(-,1)}(x_A) =
	\begin{cases}
       \ 0 & |x_A|>\frac{a}{2}+c_1\sqrt{a}; \\[0.5ex]
       \ 0 & |x_A|\in[\frac{a}{2}-c_1\sqrt{a},\frac{a}{2}+c_1\sqrt{a}]; \\[0.5ex]
       \ 1 & |x_A|<\frac{a}{2}-c_1\sqrt{a}.
    \end{cases}
\end{equation*}
for some sufficiently small constant $c_1$ to be specified later.

Now we are ready to define the distributions $\Dy$ and $\Dn$ over $f:\{0,1\}^{n}\rightarrow\{0,1\}$. We follow the convention that random variables are in boldface and fixed quantities are in the standard typeface.


Let $m=n-a$. A function $\bm{f}_{\yes}\sim \Dy$ is drawn as follows.
We start by sampling a subset $\bm{A}\subseteq [n]$ of size $a$ uniformly at random and let $\bm{C}\coloneqq [n]\setminus \bm{A}$. Note that there are in total $n-a$ control variables. We let $L\coloneqq 0.1\cdot2^{\sqrt{n-a}/\epsilon}$ and draw an $L$-term monotone Talagrand DNF $\bm{T}\sim \Tal(n-a,\eps)$ on $\bm{C}$ as described in \Cref{def:talagrand}. Finally, we sample $L$ random bits $\bm{b}\in\{0,1\}^L$ uniformly at random. 
Given $\bA,\bT$ and $\bb$, $\bm{f}_{\yes}$ is defined
  by letting 
\begin{align*}
\bm{f}_{\yes}(x ) &=
	\begin{cases}
       \ 1 & |S_{\bm{T}}(x_{\bm{C}})|>1\ \text{or}\  |x_{\bm{C}}|> m/2+0.05\epsilon\sqrt{m}; \\[0.5ex]
       \ 0 & |S_{\bm{T}}(x_{\bm{C}})|=0\ \text{or}\ |x_{\bm{C}}|<m/2 ; \\[0.5ex]
       \ h^{(+,0)}(x_{\bm{A}}) & S_{\bm{T}}(x_{\bm{C}})=\{\ell\},\ 
        |x_{\bm{C}} | \in [m/2,m/2+0.05\epsilon\sqrt{m}] \text{ and } \bm{b}_{\ell}=0; \\[0.5ex]
       \ h^{(+,1)}(x_{\bm{A}}) & S_{\bm{T}}(x_{\bm{C}})=\{\ell\},\ 
        |x_{\bm{C}} | \in [m/2,m/2+0.05\epsilon\sqrt{m}] \text{ and } \bm{b}_{\ell}=1.
    \end{cases}
\end{align*}


To draw a function $\bm{f}_{\no}\sim \Dn$, we sample $\bA,\bT$ and $\bb$ exactly as in the definition of $\Dy$ above, and we replace $h^{(+,0/1)}$ by $h^{(-,0/1)}$ in the construction described above. In more detail, $\bm{f}_{\no}$ is defined by letting 
\begin{align*}
\bm{f}_{\no}(x ) &=
	\begin{cases}
       \ 1 & |S_{\bm{T}}(x_{\bm{C}})|>1\ \text{or}\  |x_{\bm{C}} | > m/2+0.05\epsilon\sqrt{m}; \\[0.5ex]
       \ 0 & |S_{\bm{T}}(x_{\bm{C}})|=0\ \text{or}\  |x_{\bm{C}} | <m/2 ; \\[0.5ex]
       \ h^{(-,0)}(x_{\bm{A}}) & S_{\bm{T}}(x_{\bm{C}})=\{\ell\},\ 
        |x_{\bm{C}} | \in [m/2,m/2+0.05\epsilon\sqrt{m}] \text{ and } \bm{b}_{\ell}=0; \\[0.5ex]
       \ h^{(-,1)}(x_{\bm{A}}) & S_{\bm{T}}(x_{\bm{C}})=\{\ell\},\ 
        |x_{\bm{C}} | \in [m/2,m/2+0.05\epsilon\sqrt{m}] \text{ and } \bm{b}_{\ell}=1.
    \end{cases}
\end{align*}


\usetikzlibrary{decorations.pathreplacing,angles,quotes}
\usetikzlibrary{shapes.geometric}
\usetikzlibrary{patterns}

\begin{figure}

\begin{tikzpicture}[scale=0.86]

	
	\fill[pattern=crosshatch,pattern color=black, opacity=0.25] (-1, -1.5) -- (-0.5, -1) -- (-2.25, 0.75) -- (-2.75, 0.25);
	\fill[pattern=crosshatch,pattern color=black, opacity=0.25] (1, -1.5) -- (0.5, -1) -- (2.25, 0.75) -- (2.75, 0.25);
	
	\fill[color=purple!20!blue!40!white!70] (-2.25, 0.75) -- (-0.5, -1) -- (0, -0.5) -- (0.5, -1) -- (2.25, 0.75) -- (0, 3);
	\fill[color=purple!20!blue!40!white!70] (-2.5, 0.5) -- (0, 3) -- (2.5, 0.5);
	
	\fill[color=white] (-3, 0) -- (0, -3) -- (3, 0); 
	
	\draw[-] (0, -3) -- (-3, 0) -- (0, 3) -- (3, 0) -- (0, -3);
	\draw[dashed] (-3, 0) -- (3, 0);
	\draw[dashed] (-2.5, 0.5) -- (2.5, 0.5);

	\draw[-] (-2.75, 0.25) -- (-1, -1.5) -- (1.75, 1.25);
	\draw[-] (-2.25, 0.75) -- (0, -1.5) -- (2.25, 0.75);
	\draw[-,line width=0.3mm] (-1.75, 1.25) -- (1, -1.5) -- (2.75, 0.25) -- (0,3) -- (-1.75, 1.25);
		
	\node (0) at (0, -2){$0$};
	\node (1) at (0, 1){$1$};
	\node (talagrand-cube) at (1.15, -0.85){$\bT_{\ell}$};
	
	
	\draw[decoration={brace,raise=5pt}, decorate, line width=0.25mm] (-3.05, 0) -- node[left=2pt, align=center, text width = 2.5cm] {\footnotesize $\sbra{\frac{m}{2}, \frac{m + 0.1\eps\sqrt{m}}{2}}$} (-3.05, 0.5);

	
	
	\node (heads) at (5.5, 2.5){\includegraphics[width=1.1cm]{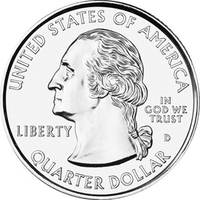}};
	\node (bh) at (5.5, 1.5){If $\bb_{\ell} = 1$:};
	
	\fill[color=purple!20!blue!40!white!70] (10, 3) -- (9, 4) -- (8, 3);
	\fill[color=purple!20!blue!40!white!70] (10, 2) -- (9, 1) -- (8, 2);
	
	\draw[-] (9, 1) -- (7.5, 2.5) -- (9, 4) -- (10.5, 2.5) -- (9, 1);
	\draw[dashed] (10, 2) -- (8, 2);
	\draw[dashed] (10, 3) -- (8, 3);
	
	\node() at (9, 3.4){$1$};
	\node() at (9, 2.5){$0$};
	\node() at (9, 1.6){$1$};
	
	
	\node (tails) at (5.5, -2){\includegraphics[width=1.1cm]{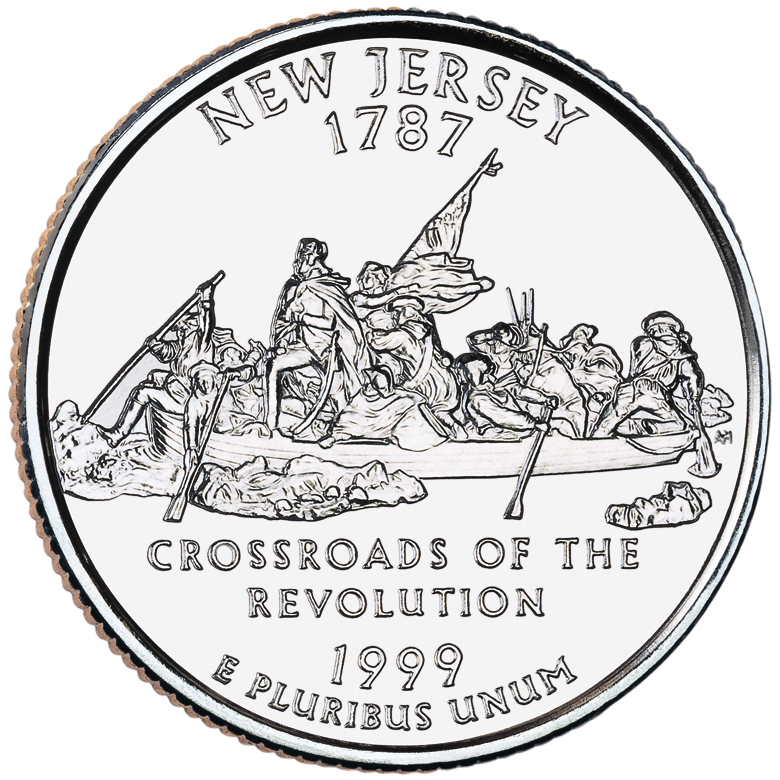}};
	\node (bt) at (5.5, -3){If $\bb_{\ell} = 0$:};
	
	\draw[-] (9, -3.5) -- (7.5, -2) -- (9, -0.5) -- (10.5, -2) -- (9, -3.5);	
	\draw[dashed] (10, -2.5) -- (8, -2.5);
	\draw[dashed] (10, -1.5) -- (8, -1.5);
	
	\node() at (9, -1.1){$0$};
	\node() at (9, -2){$0$};
	\node() at (9, -2.9){$0$};
	
	\draw[decoration={brace,raise=5pt}, decorate, line width=0.25mm] (10.5, 3) -- node[right=2pt, align=center, text width = 2.5cm] {\footnotesize $\sbra{\frac{a}{2} \pm c_1\sqrt{a}}$} (10.5, 2);

	
	\draw[-latex] (2.35, 0.1) -- (4.5, 2.25);
	\draw[-latex] (2.35, 0.1) -- (4.5, -2.25);
	\node () at (2.35, 0.1) [circle,fill,inner sep=1pt]{};
	
	\node (control) at (0, -4.75){$\zo^{\bC} \equiv \zo^m$};
	\node (action) at (9, -4.75){$\zo^{\bA} \equiv \zo^{a}$};

\end{tikzpicture}

\

\caption*{(a) A draw of $\fyes\sim\Dyes$}

\bigskip


\begin{tikzpicture}[scale=0.86]

	
	\fill[pattern= crosshatch,pattern color=black, opacity=0.25] (-1, -1.5) -- (-0.5, -1) -- (-2.25, 0.75) -- (-2.75, 0.25);
	\fill[pattern= crosshatch,pattern color=black, opacity=0.25] (1, -1.5) -- (0.5, -1) -- (2.25, 0.75) -- (2.75, 0.25);
	
	\fill[color=purple!20!blue!40!white!70] (-2.25, 0.75) -- (-0.5, -1) -- (0, -0.5) -- (0.5, -1) -- (2.25, 0.75) -- (0, 3);
	\fill[color=purple!20!blue!40!white!70] (-2.5, 0.5) -- (0, 3) -- (2.5, 0.5);
	
	\fill[color=white] (-3, 0) -- (0, -3) -- (3, 0); 
	
	\draw[-] (0, -3) -- (-3, 0) -- (0, 3) -- (3, 0) -- (0, -3);
	\draw[dashed] (-3, 0) -- (3, 0);
	\draw[dashed] (-2.5, 0.5) -- (2.5, 0.5);

	\draw[-] (-2.75, 0.25) -- (-1, -1.5) -- (1.75, 1.25);
	\draw[-] (-2.25, 0.75) -- (0, -1.5) -- (2.25, 0.75);
	\draw[-,line width=0.3mm] (-1.75, 1.25) -- (1, -1.5) -- (2.75, 0.25) -- (0,3) -- (-1.75, 1.25);
		
	\node (0) at (0, -2){$0$};
	\node (1) at (0, 1){$1$};
	\node (talagrand-cube) at (1.15, -0.85){$\bT_{\ell}$};
	
	
	\draw[decoration={brace,raise=5pt,}, decorate, line width=0.25mm] (-3.05, 0) -- node[left=2pt, align=center, text width = 2.5cm] {\footnotesize $\sbra{\frac{m}{2}, \frac{m + 0.1\eps\sqrt{m}}{2}}$} (-3.05, 0.5);

	
	
	\node (heads) at (5.5, 2.5){\includegraphics[width=1.1cm]{images/quarter-front}};
	\node (bh) at (5.5, 1.5){If $\bb_{\ell} = 1$:};
	
	\fill[color=purple!20!blue!40!white!70] (10, 2) -- (9, 1) -- (8, 2);
	
	\draw[-] (9, 1) -- (7.5, 2.5) -- (9, 4) -- (10.5, 2.5) -- (9, 1);
	\draw[dashed] (10, 2) -- (8, 2);
	\draw[dashed] (10, 3) -- (8, 3);
	
	\node() at (9, 3.4){$0$};
	\node() at (9, 2.5){$0$};
	\node() at (9, 1.6){$1$};
	
	
	\node (tails) at (5.5, -2){\includegraphics[width=1.1cm]{images/jersey-quarter-back}};
	\node (bt) at (5.5, -3){If $\bb_{\ell} = 0$:};
	
	\fill[color=purple!20!blue!40!white!70] (10, -1.5) -- (9, -0.5) -- (8, -1.5);
	
	\draw[-] (9, -3.5) -- (7.5, -2) -- (9, -0.5) -- (10.5, -2) -- (9, -3.5);	
	\draw[dashed] (10, -2.5) -- (8, -2.5);
	\draw[dashed] (10, -1.5) -- (8, -1.5);
	
	\node() at (9, -1.1){$1$};
	\node() at (9, -2){$0$};
	\node() at (9, -2.9){$0$};
	
	\draw[decoration={brace,raise=5pt}, decorate, line width=0.25mm] (10.5, 3) -- node[right=2pt, align=center, text width = 2.5cm] {\footnotesize $\sbra{\frac{a}{2} \pm c_1\sqrt{a}}$} (10.5, 2);
	
	
	\draw[-latex] (2.35, 0.1) -- (4.5, 2.25);
	\draw[-latex] (2.35, 0.1) -- (4.5, -2.25);
	\node () at (2.35, 0.1) [circle,fill,inner sep=1pt]{};
	
	\node (control) at (0, -4.75){$\zo^{\bC} \equiv \zo^m$};
	\node (action) at (9, -4.75){$\zo^{\bA} \equiv \zo^{a}$};
	
\end{tikzpicture}

\

\caption*{(b) A draw of $\fno\sim\Dno$}

\caption{The left hand side depicts the control subcube $\zo^{\bC}$ with the terms of the Talagrand DNF, and the right hand side depicts an action subcube $\zo^{\bA}$. The cross-hatched region in the control subcube corresponds to outcomes of the control bits for which the action subcube determines the value of the function  and the dashed lines indicate Hamming weight levels.}
\label{fig:monotone-no}
\end{figure}

\subsection{Distance to Monotonicity and Unateness}


%

It is easy to verify that every function from $\Dy$ is close to monotone (and thus, unate):

\def\bC{{\bm{C}}}
\def\bT{{\bm{T}}}
\begin{lemma}\label{lemma: close to monotone}
Every function in the support of $\Dy$ is $(0.1c_1\eps)$-close to monotonicity.
\end{lemma}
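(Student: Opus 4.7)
My plan is to exhibit, for every $f_{\yes}$ in the support of $\Dyes$, an explicit monotone function $g\isazofunc$ with $\dist(f_{\yes},g)\le 0.1\,c_1\,\eps$. I would define $g$ to agree with $f_{\yes}$ on the ``high'' region $R_H := \{x : |S_{\bm{T}}(x_{\bm{C}})|>1 \text{ or } |x_{\bm{C}}| > m/2+0.05\eps\sqrt{m}\}$ (where $f_{\yes}\equiv 1$) and on the ``low'' region $R_L := \{x : |S_{\bm{T}}(x_{\bm{C}})|=0 \text{ or } |x_{\bm{C}}|<m/2\}\setminus R_H$ (where $f_{\yes}\equiv 0$). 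On the remaining ``middle'' region $R_M$, characterized by $S_{\bm{T}}(x_{\bm{C}}) = \{\ell\}$ for a unique index $\ell$ and $|x_{\bm{C}}|\in [m/2,m/2+0.05\eps\sqrt{m}]$, I would replace the action-variable value $h^{(+,\bm{b}_\ell)}(x_{\bm{A}})$ by the constant $\bm{b}_\ell$; i.e.\ set $g(x) := \bm{b}_\ell$ on $R_M$. In particular, $g$ depends only on $x_{\bm{C}}$.

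The main step (and the place requiring care) is verifying that $g$ is monotone. Since $g$ is a function of $x_{\bm{C}}$ alone, fix $x_{\bm{C}}\le y_{\bm{C}}$ and case-split on the region containing $x$. The cases $x \in R_L$ (trivial) and $x\in R_H$ (both defining conditions of $R_H$ propagate from $x_{\bm{C}}$ to $y_{\bm{C}}$ by monotonicity of the Talagrand terms and $|y_{\bm{C}}|\ge |x_{\bm{C}}|$) are straightforward. The only interesting case is $x\in R_M$ with unique satisfied term $\ell$: since $T_\ell(y_{\bm{C}})=1$ and $|y_{\bm{C}}|\ge m/2$, we have $\ell \in S_{\bm{T}}(y_{\bm{C}})$ and $y\notin R_L$, so either $y\in R_H$ (yielding $g(y)=1\ge g(x)$) or $y\in R_M$, in which case $|S_{\bm{T}}(y_{\bm{C}})|=1$ together with $\ell\in S_{\bm{T}}(y_{\bm{C}})$ forces $S_{\bm{T}}(y_{\bm{C}}) = \{\ell\}$ and hence $g(y)=\bm{b}_\ell=g(x)$.

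It then remains to bound $\dist(f_{\yes},g)$. The two functions agree on $R_H\cup R_L$ by construction, and also agree on $R_M$ whenever $\bm{b}_\ell = 0$, since $h^{(+,0)}\equiv 0 = \bm{b}_\ell$. Disagreements therefore occur only on $R_M$ with $\bm{b}_\ell = 1$, where $f_{\yes}(x) = h^{(+,1)}(x_{\bm{A}})$ differs from $g(x)=1$ precisely when $|x_{\bm{A}}|\in [a/2-c_1\sqrt{a},\,a/2+c_1\sqrt{a}]$. Since $x_{\bm{C}}$ and $x_{\bm{A}}$ are independent under the uniform distribution, $\dist(f_{\yes},g)$ is at most the product of the measures of the slabs $\{x_{\bm{C}} : |x_{\bm{C}}|\in [m/2, m/2+0.05\eps\sqrt{m}]\}$ and $\{x_{\bm{A}} : |x_{\bm{A}}|\in [a/2-c_1\sqrt{a},\,a/2+c_1\sqrt{a}]\}$. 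By \Cref{fact1}, these are at most $0.05\eps$ and $2c_1$ respectively (up to lower-order terms that are negligible under the hypothesis $\eps\ge c_0/\sqrt{n}$ for $c_0$ sufficiently large), giving $\dist(f_{\yes},g)\le 0.1\,c_1\,\eps$, as required.
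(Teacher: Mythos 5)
Your proposal is correct and takes essentially the same approach as the paper. The paper's proof erases $f_{\yes}$ on the set $\{x : S_T(x_C)=\{\ell\},\ |x_C|\in[m/2,m/2+0.05\eps\sqrt m],\ |x_A|\in[a/2\pm c_1\sqrt a]\}$, checks the resulting partial function is monotone, and uses the size of the erased set to bound the distance; your construction simply makes the monotone extension explicit by setting $g(x)=\bm{b}_\ell$ on the middle region (which also happens to shrink the disagreement set to the $\bm{b}_\ell=1$ half), but the monotonicity verification and the application of \Cref{fact1} are the same.
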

\begin{proof}
Let $f$ be a function in the support of $\Dy$ defined using $A,T$ and $b$.
Let $f'$ be the partial function obtained from $f$ by replacing $f(x)$ with $\nil$ for any 
  $x\in \{0,1\}^n$ that satisfies $S_T(x_C)=\{\ell\}$ for some $\ell$ and the following two conditions:
\begin{align}\label{eq:hehe1}
|x_C|\in \left[m/2,\hspace{0.03cm}m/2+0.05\eps\sqrt{m}\right]
\quad\text{and}\quad
|x_A|\in \left[a/2-c_1\sqrt{a},\hspace{0.03cm}a/2+c_1\sqrt{a}\right].
\end{align}
Note that the fraction of points erased, by \Cref{fact1}, is at most
$ 0.05\eps\cdot 2c_1= 0.1 c_1\eps.$ So it suffices to show that the 
  partial function $f'$ is monotone.
  
To prove this, it suffices to show that for any $x< y$, if $f(x)=1$ and $f(y)\ne \nil$, then $f(y)=1$.
The case when $x$ satisfies $|S_T(x_C)|>1$ or $|x_C|>m/2+0.05\eps\sqrt{m}$ is trivial. 
Otherwise, $x$ satisfies $S_T(x_C)=\{\ell\}$ for some $\ell$ and 
  $|x_C|\in [m/2,\hspace{0.03cm}m/2+0.05\eps\sqrt{m}]$, and $b_\ell=1$.
Given that $x< y$, either $|S_T(y_C)|>1$, in which case $f(y)=1$ and we are done,
  or $|y_C|>m/2+0.05\eps\sqrt{m}$, in which case $f(y)=1$ and we are also done,
  or $S_T(y_C)=\{\ell\}$ and $|y_C| \in [m/2,\hspace{0.05cm}m/2+0.05\eps\sqrt{m}]$.
For the latter, we have either $y_A$ is in the middle layers and $f(y)=\nil$ or $y_A$ is not 
  in the middle layers and thus, $f(y)=1$. This finishes the proof of the lemma.
\end{proof}
Next we show that with high probability a function drawn from $\Dn$ is far from unate (and~thus, far from
  monotone as well). Before that, we first show a property of the random choices of $\bm{A}, \bm{T}$ and $\bm{b}$.

  \begin{lemma}\label{lem:hehe2}
  Recall that $m=n-a$. With probability at least 0.01 over $\bm{A}, \bm{T}$ and $\bm{b}$, the number of $x\in\{0,1\}^{\bm{C}}$ that satisfies $S_{\bT}(x)=\{\ell\}$ for some $\ell$, $|x_\bC|\in [m/2,\hspace{0.05cm}m/2+0.05\eps\sqrt{m}]$ and $\bb_\ell=1$
  is $\Omega(\eps)\cdot 2^{m}$.  Symmetrically, with probability at least 0.01 over $\bm{A}, \bm{T}$ and $\bm{b}$, the number of  
  $x\in \{0,1\}^{\bC}$ that satisfies 
  the same conditions above except $\bb_\ell=0$
  is also $\Omega(\eps)\cdot 2^{m}$.
\end{lemma}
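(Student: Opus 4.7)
The plan is to reduce the statement to a pointwise application of Proposition~\ref{prop:talagrand-unique-property} and then convert the resulting first-moment estimate into a constant-probability statement via a reverse-Markov argument. Since every quantity in the lemma depends only on the sizes of $\bm{A}$ and $\bm{C}$ (and is permutation-invariant), I would begin by fixing an arbitrary $A \subseteq [n]$ of size $a$, setting $C \coloneqq [n]\setminus A$ and $m \coloneqq n-a$, and arguing with respect to the remaining randomness of $\bm{T}\sim\Tal(m,\eps)$ and $\bm{b}\in\{0,1\}^L$.

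Next, I would introduce the slab
$$
S \coloneqq \left\{ x \in \{0,1\}^{C} : |x| \in [m/2,\, m/2 + 0.05\eps\sqrt{m}] \right\}
$$
and verify via Fact~\ref{fact1} that $|S| = \Theta(\eps)\cdot 2^m$. The upper bound comes from summing the estimate $\binom{m}{m/2+\ell}/2^m \leq 1/\sqrt{m}$ across the $O(\eps\sqrt{m}+1)$ relevant Hamming-weight levels, while the matching lower bound uses the lower half of Fact~\ref{fact1} on the middle layer. The hypothesis $\eps \geq c_0/\sqrt{n}$ (which, since $m \geq n/2$, forces $\eps = \Omega(1/\sqrt{m})$) is precisely what makes the two bounds coincide at $\Theta(\eps) \cdot 2^m$.

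For each $x \in S$, define the indicator $\bm{Y}_x \coloneqq \mathbf{1}\cbra{S_{\bm{T}}(x) = \{\ell\}\text{ for some }\ell\text{ with }\bm{b}_\ell = 1}$, and set $\bm{N} \coloneqq \sum_{x \in S} \bm{Y}_x$. The intermediate bound established inside the proof of Proposition~\ref{prop:talagrand-unique-property} gives $\Prx_{\bm{T}}\sbra{|S_{\bm{T}}(x)| = 1} > 0.03$ for every $x \in S$; since $\bm{b}$ is drawn uniformly and independently of $\bm{T}$, conditioning on the value of the unique index $\ell$ yields $\Ex\sbra{\bm{Y}_x} > 0.015$. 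Linearity of expectation then produces $\Ex\sbra{\bm{N}} \geq 0.015 \cdot |S| = \Omega(\eps)\cdot 2^m$.

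The final step converts this expectation bound into a constant-probability lower bound. The key observation is that $\bm{N} \leq |S| = O(\eps) \cdot 2^m$ \emph{deterministically}, so $\Ex\sbra{\bm{N}}$ is within a constant factor of the sure upper bound on $\bm{N}$. Applying Markov's inequality to the nonnegative random variable $|S| - \bm{N}$ then yields $\Prx\sbra{\bm{N} \geq \Ex\sbra{\bm{N}}/2} = \Omega(1)$, and by tracking the absolute constants this probability can be made at least $0.01$. The second (symmetric) assertion of the lemma for $\bm{b}_\ell = 0$ is handled by the identical argument with the indicator redefined accordingly. The only real obstacle is ensuring that both $\Ex\sbra{\bm{N}}$ and the sure upper bound on $\bm{N}$ scale as $\Theta(\eps) \cdot 2^m$; this is exactly where the assumption $\eps \geq c_0/\sqrt{n}$ enters the argument, as otherwise $S$ would be concentrated in a single layer and the two estimates would fail to match.
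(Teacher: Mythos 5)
Your proof is correct and takes essentially the same approach as the paper: apply \Cref{prop:talagrand-unique-property} pointwise on the slab, compute a first moment, and convert it to a constant-probability statement by a reverse Markov argument. The one stylistic difference is that you run a single reverse Markov over the joint randomness of $(\bm{T},\bm{b})$, whereas the paper runs two stages (first a reverse Markov over $\bm{T}$ to show that, with probability $\geq 0.02$, an $\Omega(\eps)$-fraction of the slab has a unique term; then a second reverse Markov over $\bm{b}$ conditioned on this event); your one-shot version is a bit tidier. Two constants deserve a closer look, though: (1) the lower bound $|S|=\Omega(\eps)\cdot 2^m$ requires summing the lower half of \Cref{fact1} over all $\Theta(\eps\sqrt{m})$ Hamming-weight levels of the slab, not just the middle layer, since the middle layer alone contributes only $\Theta(1/\sqrt{m})\cdot 2^m$, which is far smaller than $\eps\cdot 2^m$ when $\eps\gg 1/\sqrt{m}$; and (2) with $\mu:=\mathbb{E}[\bm{N}]\geq 0.015|S|$ and the sure bound $\bm{N}\leq|S|$, the reverse Markov gives $\Pr[\bm{N}\geq c\mu]\geq \frac{(1-c)\mu}{|S|-c\mu}$, which at your threshold $c=1/2$ evaluates to roughly $0.0076$, falling just short of $0.01$. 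Choosing $c\leq 1/3$ (which still certifies $\bm{N}=\Omega(\eps)\cdot 2^m$) closes this gap, so the argument goes through once the threshold is loosened slightly.
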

\begin{proof}	

We first introduce the event $\GoodTalagrand(T)$, which states that there exists an $\Omega(\eps)$-fraction of points $x \in \{0,1\}^{m}$ such that $|S_T(x)|=1$. 
Formally, let $t\cdot \eps$ be the number of $x$ in $\{0,1\}^{m}$ such that $|x| \in [m/2, m/2 + 0.05\eps\sqrt{m}]$ and let $\GoodTalagrand(T)$ be the event that $\mathbb{E}_{|\bm{x}|\in[m/2, m/2 + 0.05\eps\sqrt{m}]}\left[\textbf{1}{\{|S_T(\bm{x})|=1\}}\right]\geq t\cdot \eps/100$. We will show that \[\Pr_{\bm{T}\sim\Tal(m,\eps)}[\GoodTalagrand(\bm{T})]\geq 0.02.\]

Let $p$ denote $\Pr_{\bm{T}\sim\Tal(m,\eps)}[\GoodTalagrand(\bm{T})]$. 
Recall that in \Cref{prop:talagrand-unique-property}, we have shown that $\Prx_{\bm{T}}\sbra{|S_{\bm{T}}(x)| = 1} \geq 0.03$  for any $x\in\{0,1\}^{m}$ with $|x|_1 \in [m/2, m/2 + 0.05\eps\sqrt{m}]$.
So we have $$p\cdot t\eps+(1-p)\cdot t\eps/100\geq t\eps\Prx_{\bm{T}}\sbra{|S_{\bm{T}}(x)| = 1} \geq 0.03t\eps,$$ 
which implies that $p\geq 2/99> 0.02.$

Fix an arbitrary $T$ in the support of $\Tal(m,\eps)$ such that $\GoodTalagrand(T)$ happens. Note that for each $x$ such that $S_T(x)=\{\ell\}$ for some $\ell\in[L]$, $\mathbb{E}[\bm{b}_{\ell}]=1/2.$ So by linearity of expectation and Markov's inequality, we know that with probability at least $99\%$, there is an $\Omega(\eps)$-fraction of points $x$ such that $S_T(x)=\{\ell\}$ and $b_{\ell}=1$. The other symmetric statement follows from the same argument. 
\end{proof}

\begin{lemma}\label{lem:hehe1}
With probability at least 0.01, $\bm{f}\sim \Dn$ is $\Omega(\eps)$-far from unate.
\end{lemma}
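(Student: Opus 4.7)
Fix any unate function $g : \zo^n \to \zo$ with monotonicity-orientation $\sigma \in \{+,-\}^n$, and let $\sigma_A$ denote its restriction to the action coordinates $\bm{A}$. The plan is to decompose $\dist(\bm{f}, g)$ by conditioning on the control setting $x_{\bm{C}}$ and summing per-subcube disagreements. Call $x_{\bm{C}} \in \zo^{\bm{C}}$ \emph{action-activating} if $S_{\bm{T}}(x_{\bm{C}}) = \{\ell\}$ for some $\ell$ and $|x_{\bm{C}}| \in [m/2, m/2 + 0.05\eps\sqrt{m}]$; on such inputs $\bm{f}$ restricted to the subcube $\{x_{\bm{C}}\} \times \zo^{\bm{A}}$ equals $h^{(-,\bm{b}_\ell)}$, while the restriction of $g$ must be some $\sigma_A$-monotone function on $\zo^{\bm{A}}$. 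The proof of \Cref{lem:hehe2} actually establishes a joint statement: conditional on $\GoodTalagrand(\bm{T})$, a union bound over its two symmetric $\bm{b}$-arguments gives that both counts $N_b := |\{x_{\bm{C}} \text{ action-activating}: \bm{b}_\ell = b\}|$ for $b\in\{0,1\}$ satisfy $N_b \geq \Omega(\eps) \cdot 2^m$ with conditional probability at least $0.98$, so unconditionally both $N_0, N_1 \geq \Omega(\eps)\cdot 2^m$ with probability at least $0.01$. Since distinct action subcubes are vertex-disjoint, summing per-subcube disagreements yields
\[
\dist(\bm{f}, g) \;\geq\; \frac{2^a}{2^n}\bigl(N_0 \cdot \delta_{\sigma_A}(0) + N_1 \cdot \delta_{\sigma_A}(1)\bigr) \;\geq\; \Omega(\eps)\bigl(\delta_{\sigma_A}(0) + \delta_{\sigma_A}(1)\bigr),
\]
where $\delta_{\sigma_A}(b) := \dist(h^{(-,b)}, \text{nearest } \sigma_A\text{-monotone function on } \zo^{\bm{A}})$.

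The main technical step is the key claim: for every $\sigma_A \in \{+,-\}^{\bm{A}}$, one has $\delta_{\sigma_A}(0) + \delta_{\sigma_A}(1) = \Omega(1)$, with constant depending only on $c_1$. The plan is to prove this via the XOR-reduction $y \mapsto y \oplus \sigma^-$, under which $\sigma_A$-monotonicity becomes ordinary monotone-increasing on $\zo^a$; the canonicalized functions $\tilde{h}^{(-,b)}_\sigma(y) := h^{(-,b)}(y \oplus \sigma^-)$ satisfy $\delta_{\sigma_A}(b) = \dist(\tilde{h}^{(-,b)}_\sigma, \text{monotone-increasing on }\zo^a)$. Writing $\alpha$ and $\beta = a - \alpha$ for the numbers of ``$+$'' and ``$-$'' coordinates in $\sigma_A$, at least one of $\alpha, \beta$ is $\geq a/2$. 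In the regime $\beta \geq a/2$, a direct calculation gives $\tilde{h}^{(-,0)}_\sigma(y) = \mathbf{1}[|y_+| > T + |y_-|]$ with $T = (\alpha-\beta)/2 + c_1\sqrt{a}$, so for each fixed $y_+$ the slice $\tilde{h}^{(-,0)}_\sigma(y_+, \cdot)$ is the anti-monotone threshold $\mathbf{1}[|y_-| < |y_+| - T]$ on $\zo^\beta$. For typical $y_+$ (within $O(\sqrt{\alpha})$ of its mean $\alpha/2$), the threshold $|y_+| - T$ lies within $O(1)$ standard deviations of the mean $\beta/2$ of $|y_-|$, so Hall's theorem applied to the bipartite comparability graph of the slice yields a per-slice distance to monotone-increasing of $\Omega(1)$. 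Averaging over $y_+$ gives $\delta_{\sigma_A}(0) = \Omega(1)$. Symmetrically, when $\alpha \geq a/2$, the same argument (swapping the roles of $y_+$ and $y_-$, and of $h^{(-,0)}$ and $h^{(-,1)}$) yields $\delta_{\sigma_A}(1) = \Omega(1)$.

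Combining the per-subcube decomposition with the key claim yields $\dist(\bm{f}, g) = \Omega(\eps)$ uniformly over all unate $g$, establishing the lemma. The main obstacle is the key claim in the borderline regime when $\alpha$ (or $\beta$) is barely above $a/2$, so the anti-monotone slice threshold sits near the edge of the central region of the $|y_-|$-binomial; the per-slice Hall's bound remains $\Omega(1)$ only if $c_1$ is chosen a sufficiently small absolute constant so that $c_1\sqrt{a}$ is at most a small fraction of the binomial standard deviation $\sqrt{\beta}/2 \geq \sqrt{a}/(2\sqrt{2})$, which is precisely the role of the ``sufficiently small'' choice of $c_1$ stipulated in the construction.
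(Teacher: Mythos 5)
Your proof is correct and follows essentially the same approach as the paper: the same per-subcube decomposition driven by Lemma~\ref{lem:hehe2}, and the same key step establishing that the anti-monotone slice threshold is $\Omega(1)$-far from monotone via the $k$-regular bipartite matching / Hall's theorem argument, which is precisely the content of the paper's Claim~\ref{claim:hehe1} after the XOR-shift $y \mapsto y\oplus r$. You also correctly observe that the proof of Lemma~\ref{lem:hehe2} in fact yields the joint statement (both $N_0$ and $N_1$ are $\Omega(\eps)\cdot 2^m$ simultaneously with probability at least $0.01$), which is what the paper's proof of this lemma implicitly uses.
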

\begin{proof}
We first include a folklore claim and its proof for completeness.
\begin{claim}\label{claim: k regular graph bipartite matching}
	For integer $0\leq w\leq a$, let $P_w$ denote the set of points in $\{0,1\}^a$ with Hamming weight $w$, i.e.~$P_w = \{x\in\{0,1\}^a \ : \ |x|_1=w\}$. Then for any $0\leq w\leq a/2$, the bipartite graph $(P_w,P_{a-w})$ with the poset relations as edges has a perfect matching.
\end{claim}
\begin{proof}

	The key point is that $(P_w,P_{a-w})$ is a $k$-regular bipartite graph, where $k={a-w \choose w}$.
	
	 We apply Hall's theorem to show this claim. Consider any subset $S\subseteq P_{a-w}$. The number of edges associated with $S$ is exactly $k|S|$. Let $\calN(S)$ be the neighborhood of $S$. Then we know the number of edges associated with $\calN(S)$ is exactly $k|\calN(S)|$, and these edges include the $k|S|$ edges above. So we have $k|S|\leq k|\calN(S)|$, which means $|S|\leq |\calN(S)|$.
\end{proof}

We will use the following claim about the two functions $h^{(-,0)}$ and $h^{(-,1)}$:

\begin{claim}\label{claim:hehe1}
Fix any set $A\subset [n]$ of size $a$.
For any $r\in \{0,1\}^A$, either $$h_0:= h^{(-,0)}(x\oplus r)\quad\text{or}\quad 
h_1:=h^{(-,1)}(x\oplus r)$$ is $\Omega(1)$-far
  from monotone.
\end{claim}
\begin{proof}
Fix a string $r\in \{0,1\}^A$.
Without loss of generality, we assume that $r$ satisfies $|r|\le a/2$ and show that $h_1$
  is $\Omega(1)$-far from monotone.
For the case when $|r|<a/2$, we can take $r'=r\oplus 1$ with $|r'|\ge a/2$.
Then what we prove below shows that 
$\smash{h^{(-,1)}(x\oplus r')}$
is $\Omega(1)$-far from monotone.~On the other hand, we have $h^{(-,1)}(x)=h^{(-,0)}(x\oplus 1)$ and thus, 
$$
h^{(-,0)}(x \oplus r)=h^{(-,1)}(x\oplus r\oplus 1)=h^{(-,1)}(x\oplus r')
$$
is $\Omega(1)$-far from monotone.

In the rest of the proof we focus on the case when $|r|\le a/2$ and show that $h_1$ is 
  $\Omega(1)$-far from monotone.
By the symmetry of $h_1$, we assume without loss of generality that  $r_1=\cdots=r_k=0$, where $k=a-|r|\ge a/2$. 
To prove that $h_1$ is $\Omega(1)$-far from monotone,
  it suffices to give $\Omega(2^a)$ many disjoint pairs $(x,y)$
  such that $x< y$, $h_1(x)=1$ and $h_1(y)=0$.

We start by picking a string $z\in \{0,1\}^{a-k}$ such that $$|z|\in \left[(a-k)/2-0.01\sqrt{a },\hspace{0.05cm}
(a-k)/2+0.01\sqrt{a}\right].$$ Note that the number of such $z$, by \Cref{fact1}, is $\Omega(2^{a-k})$.

Next, for any such string $z$, we build disjoint pairs $(x,y)$ such that 
  $x< y$, $$x_{[k+1:a]}=y_{[k+1:a]}=z, \quad
   |x_{[k]}|\le k/2-0.02 \sqrt{a}\quad\text{and}\quad |y_{[k]}|\ge k/2+0.02\sqrt{a}.$$
On the one hand, it follows from $k\ge a/2$,  \Cref{fact1} and \Cref{claim: k regular graph bipartite matching} 
  that the number of such disjoint pairs is $\Omega(2^{k})$.
On the other hand, we have $h_1(x)=1$ and $h_1(y)=0$ when $c_1$ is sufficiently small.
For example, for $h_1(x)$, we have
$$
|x\oplus r|=|x_{[k]}|+(a-k)-|z|\le k/2-0.02\sqrt{a}+(a-k)/2+0.01\sqrt{a}=a/2-0.01\sqrt{a},
$$
which is smaller than $a/2-c_1\sqrt{a}$ when $c_1<0.01$.

As a result, the total number of disjoint pairs is $\Omega(2^a)$ and the claim follows.
\end{proof}

To prove \Cref{lem:hehe1}, we have from \Cref{lem:hehe2} that with probability at least $0.01$ over
  $\bA,\bT$ and $\bb$, the number of $y\in \{0,1\}^{\bC}$ that satisfies 
  $S_{\bT}(y)=\{\ell\}$ for some $\ell$, $|y_\bC|\in [m/2,\hspace{0.05cm}m/2+0.05\eps\sqrt{m}]$ and $\bb_\ell=1$
  is $\Omega(\eps)\cdot 2^{n-a}$ and (symmetrically) the number of  
  $y\in \{0,1\}^{\bC}$ that satisfies 
  the same conditions with  $\bb_\ell=0$
  is also $\Omega(\eps)\cdot 2^{n-a}$.
The lemma then follows from \Cref{claim:hehe1}.
\end{proof}

To summarize, by setting the constant $c_1$ sufficiently small, there are $\eps_1$ and $\eps_2$ satisfying
  $$\eps_1=\Theta(\eps)\quad\text{and}\quad \eps_2-\eps_1=\Theta(\eps)$$ such that 
  every function drawn from $\Dy$ is $\eps_1$-close to monotone (and hence $\eps_1$-close to unate) and 
  a function drawn from $\Dn$ is $\eps_2$-far from unate with probability at least $0.01$.

\subsection{Indistinguishability of $\Dy$ and $\Dn$}\label{sec:hehe1}
To prove \Cref{thm:lower-tolerant}, we show that no non-adaptive 
  deterministic algorithm $\calA$ that makes
  $q=2^{c_2n^{1/4}/\sqrt{\eps}}$ queries, for some sufficiently small constant $c_2$, 
  can distinguish $\Dy$ from $\Dn$.
  Specifically, for any nonadaptive deterministic algorithm $\calA$ with query complexity $q=2^{c_2n^{1/4}/\sqrt{\eps}}$, we show that
\begin{equation} \label{eq:goal}
\Prx_{\bm{f}_{\yes}\sim \Dy}[\calA \text{ accepts }\bm{f}_{\yes}]\leq \Prx_{\bm{f}_{\no}\sim \Dn}[\calA \text{ accepts }\bm{f}_{\no}]+o_n(1).
\end{equation}

To this end, we define $\Bad$ to be the event that there are two strings $x$ and $y$ queried
by $\calA$ that satisfy $S_\bT(x_\bC) = S_\bT(y_\bC)=\{\ell\}$ for some $\ell$
  and $|x_\bC|,|y_\bC|\in [m/2,m/2+0.05\eps m]$ such that one is in the top region and the other is in the bottom region of the action cube, namely $|x_\bA|>a/2 +c_1\sqrt{a}$ and $|y_\bA|<a/2-c_1\sqrt{a}$. We will first show in \Cref{lemma: Bad is the only different place between Dy and Dn} that the algorithm can distinguish $\calA$ only when $\Bad$ occurs.
On the other hand, in \Cref{lemma: Bad is unlikely},
  we show $\Bad$ occurs with probability $o_n(1)$ when the number of queries 
  is $2^{c_2n^{1/4}/\sqrt{\eps}}$ and $c_2$ is sufficiently small (compared to $c_1$).

The formal argument proceeds as follows.
We write $\calA(f)$ to denote the sequence of $q$ answers to the queries made by $\calA$ to $f$.  We write $\mathrm{view}_{\calA}(\Dyes)$ (respectively $\mathrm{view}_{\calA}(\Dno)$) to be the distribution of $\calA(\boldf)$ for $\boldf\sim\Dyes$ (respectively $\boldf\sim\Dno$).
The following claim asserts that conditioned on $\Bad$ not happening, the distributions $\mathrm{view}_{\calA}(\Dy|_{\overline{\Bad}})$ and $\mathrm{view}_{\calA}(\Dn|_{\overline{\Bad}})$ are identical.

\begin{lemma}\label{lemma: Bad is the only different place between Dy and Dn}
	$\mathrm{view}_{\calA}(\Dy|_{\overline{\Bad}})=\mathrm{view}_{\calA}(\Dn|_{\overline{\Bad}}).$
\end{lemma}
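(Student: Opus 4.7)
The plan is to condition on the randomness defining the partition $\bA$ and the Talagrand DNF $\bT$ (which is the only randomness that determines whether $\Bad$ occurs, given the fixed non-adaptive query set $Q$), and then exhibit a measure-preserving bijection on the remaining randomness $\bb \in \{0,1\}^L$ that matches the yes-view with the no-view. Since the algorithm is non-adaptive, $Q$ is fixed; since $\Bad$ depends only on $(\bA,\bT,Q)$, it suffices to fix any $(A,T)$ in the support of $\overline{\Bad}$ and prove that $\calA(\fyes)$ and $\calA(\fno)$ have the same distribution when $\bb$ is uniform on $\{0,1\}^L$.

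The key step is a case analysis of $\fyes(q)$ and $\fno(q)$ for each query $q\in Q$, reading off the definitions of $h^{(\pm,0)}$ and $h^{(\pm,1)}$. Queries with $|S_T(q_C)|\ne 1$, or with $|q_C|\notin [m/2,\,m/2+0.05\eps\sqrt m]$, receive identical (and $\bb$-independent) answers under both $\fyes$ and $\fno$ (either $0$ or $1$). For the remaining queries, write $S_T(q_C)=\{\ell(q)\}$. If $|q_A|\in[a/2-c_1\sqrt a,\,a/2+c_1\sqrt a]$, then all four $h$-functions evaluate to $0$, so again both answers are $0$. If $|q_A|>a/2+c_1\sqrt a$ (the ``top'' region), then $h^{(+,0)}=0,\,h^{(+,1)}=1$ and $h^{(-,0)}=1,\,h^{(-,1)}=0$, giving $\fyes(q)=\bb_{\ell(q)}$ and $\fno(q)=1-\bb_{\ell(q)}$. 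If $|q_A|<a/2-c_1\sqrt a$ (the ``bottom'' region), then $h^{(+,0)}=h^{(-,0)}=0$ and $h^{(+,1)}=h^{(-,1)}=1$, giving $\fyes(q)=\fno(q)=\bb_{\ell(q)}$. In every case, $\fno(q)$ equals either $\fyes(q)$ or $1-\fyes(q)$, and this distinction is determined solely by whether $q$ lies in the top or bottom region of the action subcube.

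Conditioning on $\overline{\Bad}$ guarantees that for each term index $\ell\in[L]$, the set of queries $q$ with $\ell(q)=\ell$ avoids mixing top-region and bottom-region action points; hence there is a well-defined sign $\sigma_\ell\in\{0,1\}$ (a deterministic function of $A,T,Q$) such that $\fno(q)=\fyes(q)\oplus\sigma_{\ell(q)}$ for every query $q$ on which the two distributions could disagree. I would then define $\phi:\{0,1\}^L\to\{0,1\}^L$ by $\phi(b)_\ell=b_\ell\oplus\sigma_\ell$. This is an involution, hence a bijection preserving the uniform measure; and by construction, the no-function built from $b$ produces exactly the same query sequence as the yes-function built from $\phi(b)$, term by term. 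Therefore the pushforward of uniform $\bb$ under $\calA\circ\fno$ coincides with that under $\calA\circ\fyes$, proving $\mathrm{view}_\calA(\Dy\mid\overline{\Bad})=\mathrm{view}_\calA(\Dn\mid\overline{\Bad})$.

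The only subtle point to watch for is correctness of the ``sign'' construction: one must verify that $\overline{\Bad}$ really forbids the one configuration in which a single $\bb_\ell$ would be required to simultaneously play the ``flipped'' role (top region) and the ``unflipped'' role (bottom region). This is immediate from the definition of $\Bad$, which exactly rules out a pair $(x,y)$ with $S_T(x_C)=S_T(y_C)=\{\ell\}$, both $|x_C|,|y_C|$ in the middle $C$-strip, one on top of the $A$-cube and the other on the bottom. Once that is cleanly set up, the rest is bookkeeping, and no further probabilistic estimate is needed — the equality of views holds pointwise on $(\bA,\bT)\in\overline{\Bad}$ via the bijection $\phi$.
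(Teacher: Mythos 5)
Your proof is correct and takes essentially the same route as the paper's: fix $(\bA,\bT)$ on $\overline{\Bad}$, partition the ``important'' queried points by the unique term index $\ell$, and observe that under $\overline{\Bad}$ all such queries sharing the same $\ell$ land on the same side of the action cube, so the answers are governed by the single uniform bit $\bb_\ell$ in the same way under $\Dy$ and $\Dn$. The only minor presentational difference is that you exhibit the explicit measure-preserving involution $\phi(b)_\ell = b_\ell \oplus \sigma_\ell$, whereas the paper simply notes that each restricted $\bm{f}_\ell$ is a uniform bit under both distributions and that the $\bm{f}_\ell$'s are independent across $\ell$.
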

\begin{proof}
	Let $Q_{\calA}$ be the set of points queried by $\calA$. 
	
	Recall that the distributions of the partition of $[n]$ into control variables $\bm{C}$ and action variables $\bm{A}$ are identical for $\Dy$ and $\Dn$. So fix an arbitrary partition $C$ and $A$. As the distribution of Talagrand DNF $\bm{T}\sim \Tal(m,\eps)$ is also identical for $\Dy$ and $\Dn$, we fix an arbitrary $T$. Let $\bm{f}_{\yes}\sim \Dy$ be a random function drawn from $\Dy$.
	
	Note that for any point $x\in\{0,1\}^n$ such that $|S_T(x_C)|\neq 1$, $|x_C|\not\in [m/2,m/2+0.05\eps\sqrt{m}]$ or $|x_A|\in[a/2-c_1\sqrt{a},a/2+c_1\sqrt{a}]$, by construction we have that $\bm{f}(x)$ can be determined directly in the same way for both $\Dy$ and $\Dn$. So it suffices for us to consider the points $x$ such that $|S_T(x_C)|=1$, $|x_C|\in [m/2,m/2+0.05\eps\sqrt{m}]$, and $|x_A|\not\in[a/2-c_1\sqrt{a},a/2+c_1\sqrt{a}]$. We call these points \emph{important} points.
	
    We divide these important points into disjoint groups according to $S_T(x_C)$. More precisely, for every $\ell\in [L]$, let $X_{\ell}=\{x\mid x \text{ is important},S_T(x_C)=\{\ell\}\}$. Let $\bm{f}_{\ell}(x)$ denote the function $\bm{f}(x)$ restricted to $X_{\ell}$. Note that for a fixed $\ell\in[L]$, the functions $\bm{f}_{\ell}(x)$ only depends on the random bit $\bm{b}_{\ell}$. As a result, the distributions of functions $\bm{f}_{\ell}(x)$ for different $\ell$ are independent.
	
	So it suffices for us to fix an arbitrary $\ell\in[L]$ and only consider points that are in $X_{\ell}$. The condition that $\Bad$ does not happen implies that $|x_A|>a/2+c_1\sqrt{a}$ for all $x\in Q_{\calA}\cap X_{\ell}$ \emph{or} $|x_A|<a/2-c_1\sqrt{a}$ for all $x\in Q_{\calA}\cap X_{\ell}$. In particular, this means 
	$\bm{f}_{\ell}(x)=\bm{f}_{\ell}(y)$ for all $x,y\in Q_{\calA}\cap X_{\ell}$, and this holds for both $\Dy$ and $\Dn$. 
	
	Since $\bm{f}_{\ell}(x)$ are the same for all $x\in Q_{\calA}\cap X_{\ell}$, the distribution of $\bm{f}_{\ell}$ is actually one random bit, which only depends on the uniform random bit $\bm{b}_{\ell}$. Indeed, $\bm{f}_{\ell}(x^i)=0$ with probability $1/2$ and $\bm{f}_{\ell}(x^i)=1$ with probability $1/2$, which holds for both $\Dy$ and $\Dn$.
	
	This finishes the proof.
\end{proof}

Next, we show that the probability that $\Bad$ happens is small
(recall that $q=2^{c_2n^{1/4}/\sqrt{\eps}}$):

\begin{lemma}\label{lemma: Bad is unlikely}
	For any set of points $Q_{\calA}=\{x^1,\cdots,x^q\}\subseteq \{0,1\}^n$, $\Pr[\Bad]=o_n(1)$.
\end{lemma}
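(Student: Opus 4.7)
The plan is a union bound over all $\binom{q}{2} \le q^2$ pairs of distinct queries in $Q_{\calA}$. For each pair $(x^i, x^j)$, I will show that the probability (over $\bm{A}$ and $\bm{T}$) that it witnesses $\Bad$ is at most $2^{-c_3 n^{1/4}/\sqrt{\eps}}$ for some constant $c_3 > 2c_2$; since $q^2 = 2^{2c_2 n^{1/4}/\sqrt{\eps}}$, choosing $c_2$ small enough relative to $c_1$ will then yield $\Pr[\Bad] = o_n(1)$. Let $d := |x^i \oplus x^j|$, and observe that since $||x^i_{\bm{A}}| - |x^j_{\bm{A}}|| \leq d$, the pair cannot lie on opposite sides of the action cube unless $d \geq 2c_1\sqrt{a}$, so I may restrict to such pairs throughout.

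I would split the per-pair analysis at a threshold $D_0 = \Theta(\sqrt{\eps}\, n^{3/4})$. For the moderate regime $2c_1\sqrt{a} \leq d \leq D_0$, I control the randomness of $\bm{A}$ directly. Writing $P = \{i : x^i_i=1, x^j_i=0\}$ and $N = \{i : x^i_i=0, x^j_i=1\}$, the quantity $Z = |x^i_{\bm{A}}| - |x^j_{\bm{A}}| = |P \cap \bm{A}| - |N \cap \bm{A}|$ is a hypergeometric-type random variable with mean of magnitude at most $da/n$, which is $O(\sqrt{a})$ when $d \leq D_0$. The opposite-sides event requires $|Z| \geq 2c_1\sqrt{a}$, which is a deviation of $\Omega(\sqrt{a})$ from the mean; applying a Bernstein-type concentration bound separately to $|P \cap \bm{A}|$ and $|N \cap \bm{A}|$ (whose variances are $O(ad/n)$) and combining via the triangle inequality gives $\Pr[\text{opposite sides}] \leq O(1) \cdot \exp(-\Omega(\min(\sqrt{a},\, n/d))) \leq 2^{-\Omega(n^{1/4}/\sqrt{\eps})}$ throughout this range.

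For the large-distance regime $d > D_0$, I use the Talagrand DNF structure. Let $d_{\bm{A}} := |\bm{A} \cap (P \cup N)|$ and $d_{\bm{C}} := d - d_{\bm{A}}$. Since $\mathbb{E}[d_{\bm{A}}] = da/n$ is much smaller than $d/2$ (because $a/n = 1/(\eps\sqrt{n})$ is small), a Chernoff bound for the hypergeometric distribution gives $\Pr[d_{\bm{A}} \geq d/2] \leq \exp(-\Omega(d)) \leq 2^{-\Omega(n^{1/4}/\sqrt{\eps})}$ whenever $d \geq D_0$. Conditioning on $d_{\bm{C}} \geq d/2 \geq D_0/2$, the common-ones set $I := \{i \in \bm{C} : x^i_i = x^j_i = 1\}$ satisfies $|I| = (|x^i_{\bm{C}}| + |x^j_{\bm{C}}| - d_{\bm{C}})/2$; combined with the weight constraint $|x^i_{\bm{C}}|, |x^j_{\bm{C}}| \leq m/2 + 0.05\eps\sqrt{m}$ built into $\Bad$, this gives $|I|/m \leq 1/2 - \Omega(\sqrt{\eps}/n^{1/4})$, which dominates the $0.05\eps/\sqrt{m}$ slack. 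A union bound over the $L = 0.1 \cdot 2^{\sqrt{m}/\eps}$ terms of the Talagrand DNF then shows that over $\bm{T}$, the probability any $T_\ell$ is contained in both $1$-sets is at most $L \cdot (|I|/m)^{\sqrt{m}/\eps} \leq 0.1 \cdot \exp(-\Omega(d_{\bm{C}}/(\sqrt{m}\eps))) \leq 2^{-\Omega(n^{1/4}/\sqrt{\eps})}$, and this upper-bounds the same-unique-term clause in $\Bad$.

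The main technical obstacle is calibrating $D_0$ so that both regime bounds match. The Bernstein bound degrades like $\exp(-\Omega(n/d))$, forcing $D_0 \leq O(\sqrt{\eps}\, n^{3/4})$, while the Talagrand computation requires $d_{\bm{C}} \geq \Omega(\sqrt{\eps}\, n^{3/4})$ (hence $D_0 \geq \Omega(\sqrt{\eps}\, n^{3/4})$) in order for $(|I|/m)^{\sqrt{m}/\eps}$ to collapse at rate $2^{-\Omega(n^{1/4}/\sqrt{\eps})}$. The sweet spot $D_0 \asymp \sqrt{\eps}\, n^{3/4}$ is precisely what the choice $a = \sqrt{n}/\eps$ is optimized for, and it pins down the final $n^{1/4}/\sqrt{\eps}$ exponent in the theorem.
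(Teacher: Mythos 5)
Your proposal is correct and follows essentially the same high-level approach as the paper: fix a pair of queries, case-split on the Hamming distance at a threshold of order $\sqrt{\eps}\,n^{3/4}$, bound the moderate-distance case via concentration of the random action set $\bm{A}$, bound the large-distance case via the structure of the Talagrand DNF, and union-bound over the $\le q^2$ pairs. The technical packaging differs slightly but not substantively. The paper works with the one-sided count $t = |I_{01}|$ (coordinates where $x_i=0,y_i=1$) rather than the full Hamming distance $d$, and derives $\Pr[\Bad_{xy}] \le \min(\Pr[\diamond],\Pr[\star])$ where $(\diamond)$ is the event $|I_{01}\cap\bm{A}| \ge 2c_1\sqrt{a}$ (bounded by a simple binomial tail $\binom{a}{2c_1\sqrt{a}}(t/(n-a))^{2c_1\sqrt{a}}$, rather than Bernstein) and $(\star)$ is the event that $x,y$ uniquely satisfy the same term (bounded by conditioning on $S_{\bm T}(y) = \{\ell\}$ and observing that $T_\ell$ must also avoid $I_{01}$, rather than a union bound over terms against the common-ones set $I$). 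One place where you are arguably more careful than the paper: you explicitly control how $d$ splits between $\bm{A}$ and $\bm{C}$ via the Chernoff bound on $d_{\bm A}$, whereas the paper's bound $\Pr[\star] \le (1 - t/(n-a))^{\sqrt{n-a}/\eps}$ implicitly uses that almost all of $I_{01}$ lands in $\bm{C}$ (true up to the $1-a/n = 1-1/(\eps\sqrt{n}) \ge 1 - 1/c_0$ factor, absorbed into constants). One caveat in your moderate-distance step: the mean of $Z$ can itself be as large as $\Theta(\sqrt{a})$ at the threshold $D_0$, so you need to fix the constant in $D_0$ small enough (e.g.\ $D_0 = c_1 n^{3/4}\sqrt{\eps}$) that $|\mathbb{E}[Z]| \le c_1\sqrt{a}$, guaranteeing a genuine $\Omega(\sqrt{a})$ deviation; the phrase ``deviation of $\Omega(\sqrt{a})$ from the mean'' is only right once this constant is pinned down.
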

\begin{proof}
	Fix any $x,y\in\{0,1\}^n$. We will upper bound the probability that $S_{\bT}(x_{\bm{C}})=S_{\bT}(y_{\bm{C}})=\{\ell\}$ for some $\ell\in[L]$ and $|x_{\bm{A}}|<\frac{a}{2}-c_1\sqrt{a}$ and $|y_{\bm{A}}|>\frac{a}{2}+c_1\sqrt{a}$. Call this specific event $\Bad_{xy}$.
	
	Let $I_{01}$ be the set of  $i$ with $x_i=0$ and $y_i=1$. On one hand, for $\Bad_{xy}$ to happen, we have:
	\begin{equation} \label{eq:diamond}
		|I_{01}\cap \bm{A}|\geq 2c_1\sqrt{a}. \tag{$\diamond$}
	\end{equation}
	On the other hand, to have  $S_{\bT}(x_{\bm{C}})=S_{\bT}(y_{\bm{C}})=\{\ell\}$, we must have: 
	\begin{equation} \label{eq:star}
		\text{There exists}~\ell\in[L]~\text{such that}~S_{\bT}(x)=S_{\bT}(y)=\{\ell\}. \tag{$\star$}
	\end{equation}
	It follows that 
	\[\Pr[\Bad_{xy}]\leq \min(\Pr[\diamond],\Pr[\star]);\]
we will in fact show that
\[
\min(\Pr[\diamond],\Pr[\star])
\leq 2^{-0.25c_1n^{1/4}/\sqrt{\eps}}.
\]
	
	Let $t=|I_{01}|$. By the random choice of the coordinates defining the action cube $\bA$, we have
\begin{align*}
    \Pr[\diamond] & \leq \Pr\left [\mathrm{Bin}\left (a,\frac{t}{n-a}\right )\geq 2c_1\sqrt{a}\right]\leq {a\choose 2c_1\sqrt{a}}\cdot \left(\frac{t}{n-a}\right)^{2c_1\sqrt{a}}\\
    & \leq \left(\frac{ea}{2c_1\sqrt{a}}\right)^{2c_1\sqrt{a}}\cdot \left(\frac{t}{n-a}\right)^{2c_1\sqrt{a}}\leq \left(\frac{et\sqrt{a}}{2c_1(n-a)}\right)^{2c_1\sqrt{a}}\leq \left(\frac{et\sqrt{a}}{2c_1n(1-\frac{1}{c_0})}\right)^{2c_1\sqrt{a}}.
\end{align*}
To bound $\Pr[\star]$, we use
\begin{align*}
\Pr[\star]&=\Pr[S_{\bT}(x)=S_{\bT}(y)  ~\text{and there exists}~ \ell \in [L] \text{~such that~}S_{\bT}(y)=\{\ell\}]\\
&\leq \Pr[S_{\bT}(x)=S_{\bT}(y) \ \mid \ \text{there exists}~\ell \in [L] \text{~such that~}S_{\bT}(y)=\{\ell\}]\\
&\leq \max_{\ell \in [L]} \Pr[S_{\bT}(x)=S_{\bT}(y) \ \mid \ S_{\bT}(y)=\{\ell\}]\\
&\leq \left(1-\frac{t}{n-a}\right)^{\sqrt{n-a}/\eps}\leq e^{-t/(\eps\sqrt{n-a})}\leq e^{-t/(\eps\sqrt{n})},
\end{align*}
where the last line above is by the definition of the random process $\bT \sim \Tal(m,\eps)$. Next, note that
\begin{align*}
	\text{If $t\leq c_1 \frac{1}{4}n^{3/4}\cdot	\sqrt{\eps}$, then} & \Pr[\diamond]\leq 2^{-c_1n^{1/4}/\sqrt{\eps}}; ~\text{and} \\
	\text{If $t > c_1 \frac{1}{4}n^{3/4}\cdot\sqrt{\eps}$, then} & \Pr[\star] <  2^{-0.25c_1n^{1/4}/\sqrt{\eps}}.
\end{align*}
Overall, we thus get that 
\[\Pr[\Bad_{xy}]\leq \min(\Pr[\diamond],\Pr[\star])\leq 2^{-0.25c_1n^{1/4}/\sqrt{\eps}}.\]
	By a union bound for all pairs of points of $Q_{\calA}$, we know that 
	\[\Pr[\Bad]\leq 2^{-0.25c_1n^{1/4}/\sqrt{\eps}}\cdot \left(2^{c_2n^{1/4}/\sqrt{\eps}}\right)^2=o_n(1)\]
	as long as $c_2$ is sufficiently small (compared to $c_1$). This completes the proof.
\end{proof}

Now we are ready to prove \Cref{thm:lower-tolerant}.

\begin{proof}[Proof of \Cref{thm:lower-tolerant}]
Let $\calD=\frac{1}{2}\{\Dy+\Dn\}$. Then we have
\begin{align}
    \Prx_{ \calD}[\calA \text{ is correct on }\bm{f}] &= \frac{1}{2}\left(\Prx_{\Dy}[\calA \text{ is correct on }\bm{f}_{\yes}]+\Prx_{\Dn}[\calA \text{ is correct on }\bm{f}_{\no}]\right) \nonumber \\
    &= \frac{1}{2}\left(\Prx_{\Dy}[\calA \text{ accepts }\bm{f}_{\yes}]+\Prx_{\Dn}[\calA \text{ is correct on }\bm{f}_{\no}]\right) \label{eq:third}\\
    &\leq \frac{1}{2}\left(\Prx_{\Dy}[\calA \text{ accepts }\bm{f}_{\yes}]+0.99+0.01\Prx_{\Dn}[\calA \text{ rejects }\bm{f}_{\no}]\right) \label{eq:fourth}\\
    &= \frac{1}{2}\left(\Prx_{\Dy}[\calA \text{ accepts }\bm{f}_{\yes}]+1-0.01\Prx_{\Dn}[\calA \text{ accepts }\bm{f}_{\no}]\right) \nonumber \\
    &\leq \frac{199}{200}+\frac{1}{200}\left(\Prx_{\Dy}[\calA \text{ accepts }\bm{f}_{\yes}]-\Prx_{\Dn}[\calA \text{ accepts }\bm{f}_{\no}]\right)\nonumber \\
    &= \frac{199}{200}+\frac{\Pr[\Bad]}{200} \left(\Prx_{\Dn|_{\Bad}}[\calA \text{ accepts }\bm{f}_{\no}]-\Prx_{\Dy|_{\Bad}}[\calA \text{ accepts }\bm{f}_{\yes}]\right) \label{eq:seventh}\\
    & \leq \frac{199}{200}+\frac{\Pr[\Bad]}{200}\nonumber \\[1ex]
    & \leq \frac{199}{200}+o_n(1), \label{eq:last}
\end{align}
where \Cref{eq:third} is because of \Cref{lemma: close to monotone}, \Cref{eq:fourth} is because $\bm{f}_{\no}$ is not $c_2$-far from unate with probability at most $0.99$ thanks to \Cref{lem:hehe1}, \Cref{eq:seventh} is from \Cref{lemma: Bad is the only different place between Dy and Dn}, and \Cref{eq:last} follows from \Cref{lemma: Bad is unlikely}. \Cref{thm:lower-tolerant} now follows from Yao's minimax principle (\Cref{thm:yao-minimax}).
\end{proof}

\section{Lower Bounds on Tolerant Testers for Juntas}

We use a different, simpler pair of distributions $\Dy$ and $\Dn$ for juntas. 
Let $a=n/2$ and $A\subset [n]$ be a set of size $a$. 
The four functions $h^{(+,0)},h^{(+,1)},h^{(-,0)}$ and $h^{(-,1)}$ over $\{0,1\}^A$ are defined in
  the same way as in \Cref{sec:mono} with the constant $c_1$ fixed to be $0.05$.
  
\def\bbf{\bm{f}}  

To draw a function $\bm{f}_{\yes}\sim \Dy$, we first sample a set $\bA\subset [n]$ of size $a$ uniformly 
  at random, set $\bC=[n]\setminus \bA$, and sample a Boolean function $\bb$ over $\{0,1\}^{\bC}$ uniformly at random. 
Then the Boolean function $\bm{f}_{\yes}$ over $\{0,1\}^n$  is defined using $\bA$ and $\bb$ as follows:
$$
\bm{f}_{\yes}(x)=\begin{cases}
h^{(+,0)}(x_\bA)  & \bb(x_{\bC})=0\\[0.3ex]
h^{(+,1)}(x_\bA)  & \bb(x_{\bC})=1
\end{cases}
$$
To draw $\bm{f}_{\no}\sim \Dn$, we first sample  $\bA$ and $\bb$ in the same way as in $\Dy$,
  and  $\bm{f}_{\no}$ is defined as
$$
\bm{f}_{\no}(x)=\begin{cases}
h^{(-,0)}(x_\bA)  & \bb(x_{\bC})=0\\[0.3ex]
h^{(-,1)}(x_\bA)  & \bb(x_{\bC})=1
\end{cases}
$$  

We first show that every function in the support of $\Dy$ is close to a $(n/2)$-junta.
  
\begin{lemma}
Every function in the support of $\Dy$ is $0.1$-close to a $(n/2)$-junta.
\end{lemma}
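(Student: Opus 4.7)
The plan is to exhibit, for each function in the support of $\Dy$, an explicit $(n/2)$-junta within normalized Hamming distance $0.1$. Recall that a draw $\bbf_\yes \sim \Dy$ is determined by a set $A \subset [n]$ of size $a = n/2$ (with $C \coloneqq [n]\setminus A$ also of size $n/2$) together with a Boolean function $b : \{0,1\}^C \to \{0,1\}$. I would take as the approximating junta the function $g(x) \coloneqq b(x_C)$: this depends only on the $|C| = n/2$ coordinates in $C$, and so is an $(n/2)$-junta.

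The key structural observation is that $h^{(+,0)}$ is identically $0$. Hence whenever $b(x_C) = 0$ we have $\bbf_\yes(x) = h^{(+,0)}(x_A) = 0 = g(x)$, contributing nothing to the distance. When $b(x_C) = 1$, on the other hand, $\bbf_\yes(x) = h^{(+,1)}(x_A)$, and inspecting the piecewise definition of $h^{(+,1)}$ shows that $h^{(+,1)}(x_A) \ne 1$ precisely when $|x_A|$ falls in the middle range $[a/2 - c_1\sqrt{a},\,a/2 + c_1\sqrt{a}]$. Therefore any disagreement between $\bbf_\yes$ and $g$ forces $|x_A|$ to lie in this middle range, so
\[
\dist(\bbf_\yes, g) \;\le\; \Prx_{\bx \sim \{0,1\}^n}\!\sbra{\,|\bx_A| \in [a/2 - c_1\sqrt{a},\,a/2 + c_1\sqrt{a}]\,}.
\]

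The final step is to bound this middle-layer probability, which reduces to a routine estimate on sums of central binomial coefficients. Since $\bx_A$ is uniform on $\{0,1\}^A$, this probability is a sum of at most $2c_1\sqrt{a} + 1$ terms of the form $\binom{a}{a/2+\ell}/2^a$ with $|\ell| \le c_1\sqrt{a} = 0.05\sqrt{a}$, each at most $1/\sqrt{a}$ by \Cref{fact1}, giving an upper bound of $2c_1 + 1/\sqrt{a} = 0.1 + o(1)$. The only mildly delicate point is trimming the $o(1)$ slack in \Cref{fact1} so that the bound is $0.1$ outright, which is immediate from the sharper Stirling estimate $\binom{a}{a/2}/2^a \le \sqrt{2/(\pi a)}$; I do not anticipate this to be a real obstacle, and the whole argument is essentially a distance calculation for a single, hand-picked junta.
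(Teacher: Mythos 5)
Your proof matches the paper's: both take $g(x) := b(x_C)$ as the approximating $(n/2)$-junta, observe that $\bbf_\yes$ and $g$ can only disagree when $|x_A|$ falls in the middle band, and bound that probability via \Cref{fact1}. Your remark about tightening the constant with the sharper Stirling estimate $\binom{a}{a/2}/2^a \le \sqrt{2/(\pi a)}$ is a reasonable observation about a slack the paper elides.
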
  
\begin{proof}
Given $A,b$ and the function $f$ they define in the support of $\Dy$, we let $g$ be 
  the function such that $g(x)=b(x_C)$. It is clear that $g$ is a $(n/2)$-junta
  and the distance between $f$ and $g$, using \Cref{fact1}, is at most $0.1$.
\end{proof}

Next we show that with high probability, $\bm{f}_{\no}\sim \Dn$ is far from a $(n/2)$-junta.

\begin{lemma}
With probability at least $1-o_n(1)$, $\bm{f}_{\no}\sim \Dn$ is $0.2$-far from any $(n/2)$-junta.
\end{lemma}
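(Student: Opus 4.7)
The plan is to fix $J\subseteq[n]$ with $|J|=n/2$, show that $\bm{f}_{\no}$ is $0.2$-far from every $J$-junta with probability $1-\exp(-2^{\Omega(n)})$ over $(\bm{A},\bm{b})$, and then union-bound over the $\binom{n}{n/2}\le 2^n$ choices of $J$. For fixed $J$ and $y\in\{0,1\}^J$, define $p(y) := \Pr_{x_{J^c}}[\bm{f}_{\no}(y,x_{J^c}) = 1]$. The standard pair argument (comparing a junta $g$ on pairs $(x,x')$ with $x_J=x'_J$) gives $\dist(\bm{f}_{\no}, g) \ge D_J := \Ex_{\bm{y}}[p(\bm{y})(1-p(\bm{y}))]$ for every $J$-junta $g$, so it suffices to prove $D_J \ge 0.2$ with high probability.

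Writing $D_J = \Ex_{\bm{x}}[\bm{f}_{\no}(\bm{x})] - \Ex_{(\bm{x},\bm{x}'):\bm{x}_J=\bm{x}'_J}[\bm{f}_{\no}(\bm{x})\bm{f}_{\no}(\bm{x}')]$, the symmetry of $h^{(-,0)}$ and $h^{(-,1)}$ together with the uniformity of $\bm{b}(x_{\bm{C}})$ forces the first term to equal $\nu_0 := \Pr_{z\sim\{0,1\}^a}[|z|>a/2+c_1\sqrt a]\approx 0.46$ deterministically, for every $\bm{b}$. For the second term, set $\alpha := |\bm{A}\cap J|$ and condition on whether $x_{\bm{C}}=x'_{\bm{C}}$ (which occurs with probability $2^{-\alpha}$ since $x_{\bm{C}\cap J}$ is shared); using the independence of $\bm{b}(x_{\bm{C}}),\bm{b}(x'_{\bm{C}})$ for distinct arguments one obtains
\[
\Ex_{\bm{b}}\Ex_{(\bm{x},\bm{x}')}[\bm{f}_{\no}(\bm{x})\bm{f}_{\no}(\bm{x}')] = 2^{-\alpha}\cdot \frac{\Pr[\text{both top}]+\Pr[\text{both bot}]}{2} + (1-2^{-\alpha})\cdot \frac{\Pr[\text{both not middle}]}{4},
\]
where the probabilities are over $(x,x')$ uniform with $x_J=x'_J$. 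A short case analysis on $\alpha$ shows this is at most $\nu_0-0.22$ uniformly: at $\alpha=0$ (i.e.\ $J=\bm{C}$), $p(\bm{y})\equiv \nu_0$ is deterministic and $D_J=\nu_0(1-\nu_0)\approx 0.25$; at $\alpha=a$ (i.e.\ $J=\bm{A}$), $p(\bm{y})\approx 1/2$ on top/bottom fibers and $0$ on middle fibers, giving $D_J\approx \nu_0/2\approx 0.23$; for intermediate $\alpha$ one combines the trivial bounds $\Pr[\text{both top}]\le\nu_0$ and $\Pr[\text{both not middle}]\le 1-\delta$ (where $\delta := 1-2\nu_0$) with the observation that in the range where $2^{-\alpha}$ is non-negligible, the possible values $w_1 \in \{0,\ldots,\alpha\}$ of $|y_{\bm{A}\cap J}|$ are negligible compared to $c_1\sqrt{a}$, so the top-fiber probability $T(w_1) := \Pr_u[w_1+|u|>a/2+c_1\sqrt{a}]$ is essentially the constant $\nu_0$ and hence $\Ex[T(w_1)^2]\approx \nu_0^2 < \nu_0 - 0.2$.

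Finally, $D_J$ depends smoothly on the $2^{n/2}$ i.i.d.\ bits of $\bm{b}$: flipping a single bit alters only an $O(2^{-n/2})$ fraction of the summands defining the correlation, so McDiarmid's inequality yields $\Pr_{\bm{b}}[D_J<0.2\mid\bm{A}]\le \exp(-\Omega(2^{n/2}))$, which easily survives the $2^n$-factor union bound over $J$. The main technical obstacle is the case analysis above for intermediate $\alpha$, which requires quantitative control of the binomial tails $T(w_1)$ and the middle-layer probability $M(w_1):=\Pr_u[w_1+|u|\in [a/2\pm c_1\sqrt{a}]]$ as functions of $w_1$ and $\alpha$.
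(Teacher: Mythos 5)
Your proof takes a genuinely different route from the paper's. The paper argues via bichromatic edges: with probability $1-o_n(1)$ over $(\bm{A},\bm{b})$, every direction $i\in\bm{C}$ has at least $0.24\cdot 2^{n/2}$ sensitive pairs of $\bm{b}$ and at least $0.9\cdot 2^{n/2}$ non-middle action strings, so $\bm{f}_{\no}$ has $>0.2\cdot 2^n$ bichromatic edges along every control direction. Any junta $g$ whose (padded) relevant set $I$ of size $n/2$ differs from $\bm{C}$ ignores some $i\in\bm{C}\setminus I$ and thus disagrees with $\bm{f}_{\no}$ on at least one endpoint of each of those disjoint edge-pairs, giving $\dist(\bm{f}_{\no},g)\geq 0.2$; when $I=\bm{C}$, the junta is constant on each action subcube while $\bm{f}_{\no}$ has density $\approx 0.46$ there, giving distance $\geq 0.45$. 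The only concentration required is a Chernoff bound plus a union over the $n/2$ directions of $\bm{C}$. Your approach, by contrast, lower-bounds $\dist(\bm{f}_{\no},g)$ by the fiberwise-variance quantity $D_J=\Ex_y[p(y)(1-p(y))]$, estimates $\Ex_{\bm{b}}[D_J]$ via a conditioning argument organized by $\alpha=|\bm{A}\cap J|$, and then invokes McDiarmid together with a union bound over all $\binom{n}{n/2}$ choices of $J$.

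Both strategies are sound in outline, but yours is heavier and is left incomplete in exactly the place it is hardest. The ``short case analysis on $\alpha$'' showing $2^{-\alpha}\cdot\tfrac{\Pr[\text{both top}]+\Pr[\text{both bot}]}{2}+(1-2^{-\alpha})\cdot\tfrac{\Pr[\text{both not middle}]}{4}\leq \nu_0-0.22$ uniformly is not actually a one-liner: $\Pr[\text{both top}]=\Ex_{w_1}[T(w_1)^2]=\nu_0^2+\operatorname{Var}(T(w_1))$, and for intermediate $\alpha$ one must show $2^{-\alpha}\operatorname{Var}(T(w_1))$ stays negligible, which requires quantitative anti-concentration/tail estimates for $T(w_1)=\Pr_u[w_1+u>a/2+c_1\sqrt{a}]$ with $u\sim\mathrm{Bin}(a-\alpha,1/2)$. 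You acknowledge this as the ``main technical obstacle'' and do not carry it out; the margin is genuinely thin (e.g.\ the $\alpha\gg 1$ regime gives $\Ex_{\bm b}[D_J]\approx \nu_0/2\approx 0.23$, so you need $\nu_0\geq 0.44$, i.e.\ middle-layer mass $\leq 0.12$, which holds by \Cref{fact1} but must be tracked). So this is a real gap, though of the ``fillable with careful binomial estimates'' kind rather than a broken idea. The McDiarmid step and the $2^n$ union bound are fine as stated. Overall the paper's edge-counting argument is shorter, avoids both the exponential union bound over $J$ and the $\alpha$-case-analysis, and I'd recommend it here.
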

\begin{proof}
First of all, it follows from Chernoff bound and a union bound that with probability at least $1-o_n(1)$,
  $\bA$ and $\bb$ satisfy the following condition: 
\begin{flushleft}\begin{enumerate}
  \item For every $i\in \bC$, there are at least $0.24\cdot 2^{n/2}$ many strings 
    $x\in \{0,1\}^\bC$ with $x_i=0$ such that $\bb(x)\ne \bb(x^{(i)})$.
\end{enumerate}\end{flushleft}
We assume below that $A$ and $b$ satisfy the above condition, and show that the function
  $f$ in the support of $\Dn$ defined using $A$ and $b$ is $0.2$-far from $(n/2)$-juntas.
  
Let $g$ be any $(n/2)$-junta and $I\subset [n]$ of size $n/2$ be its influential variables.
Given the condition above, we have that the number of bichromatic edges of $f$ along each direction $i\in C$ is at least 
$$
0.24\cdot 2^{n/2}\cdot (1-0.1)\cdot 2^{n/2}>0.2 \cdot 2^n.
$$
So if $I\ne C$, then $g$ must be at least $0.2$-far from $f$.
On the other hand, if $I=C$, then from the construction of $f$, $g$ is at least $0.45$-far from $f$.
This finishes the proof of the lemma.
\end{proof}

Let $\calA$ be a non-adaptive deterministic algorithm that makes $q=2^{0.01\sqrt{n}}$ queries.
Let $\bA$ and $\bb$ be drawn as in the definition of $\Dy$ and $\Dn$.
Let $\Bad$ be the following event:
  there are two points $x$ and $y$ queried by $\calA$ that satisfy $x_{\bC}=y_{\bC}$,
  $|x_{\bA}|>a/2+0.05\sqrt{a}$ and $|y_{\bA}|<a/2-0.05\sqrt{a}$.
Following the proof of \Cref{lemma: Bad is the only different place between Dy and Dn}, 
  $\calA$ can potentially distinguish $\Dy$ from $\Dn$ only when $\Bad$ occurs.
The next lemma shows that $\Bad$ occurs with probability $o_n(1)$, from which \Cref{thm:lower-tolerant2} follows:

\begin{lemma}
The probability of the event $\Bad$ is $o_n(1)$.
\end{lemma}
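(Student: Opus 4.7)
The plan is to mimic the structure of \Cref{lemma: Bad is unlikely} from the monotonicity section: for each pair of queried points $x,y$, bound $\Pr[\Bad_{xy}]$ (the event that $x,y$ witness $\Bad$), and then union-bound over $\binom{q}{2}\le q^2=2^{0.02\sqrt{n}}$ pairs. Since the construction has no Talagrand DNF to worry about, the only randomness over which we argue is the random partition $\bA$ (with $|\bA|=a=n/2$); the random truth table $\bb$ plays no role in $\Bad$.

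Fix $x\ne y$ and let $D=\{i\in[n]:x_i\ne y_i\}$ with $t\coloneqq|D|\ge 1$. The first requirement of $\Bad_{xy}$ is $x_{\bC}=y_{\bC}$, which is equivalent to $D\subseteq \bA$. The second requirement is $|x_{\bA}|>a/2+0.05\sqrt{a}$ and $|y_{\bA}|<a/2-0.05\sqrt{a}$, which in particular forces
\[
|x_{\bA}|-|y_{\bA}|>0.1\sqrt{a}.
\]
The key deterministic observation is that once $D\subseteq \bA$, every $i\notin D$ contributes $x_i-y_i=0$ to $|x_{\bA}|-|y_{\bA}|$, so
\[
|x_{\bA}|-|y_{\bA}|=\sum_{i\in D}(x_i-y_i),
\]
whose absolute value is at most $t$. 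Hence if $t<0.1\sqrt{a}$ the second requirement is impossible and $\Pr[\Bad_{xy}]=0$.

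It remains to handle $t\ge 0.1\sqrt{a}$. Since $\bA$ is uniform on $\binom{[n]}{a}$, we have
\[
\Pr[D\subseteq \bA]=\frac{\binom{n-t}{a-t}}{\binom{n}{a}}=\prod_{i=0}^{t-1}\frac{a-i}{n-i}\le 2^{-t},
\]
using $(a-i)/(n-i)\le 1/2$ whenever $a=n/2$ and $i\ge 0$. Therefore
\[
\Pr[\Bad_{xy}]\le \Pr[D\subseteq \bA]\le 2^{-t}\le 2^{-0.1\sqrt{a}}=2^{-0.1\sqrt{n/2}}.
\]

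Finally, union-bounding over the at most $q^2=2^{0.02\sqrt{n}}$ ordered pairs of queried points gives
\[
\Pr[\Bad]\le q^2\cdot 2^{-0.1\sqrt{n/2}}\le 2^{0.02\sqrt{n}-\frac{0.1}{\sqrt{2}}\sqrt{n}}=2^{-\Omega(\sqrt{n})}=o_n(1),
\]
as desired. There is no real obstacle here: the only subtlety is the dichotomy in the first step (small $t$ kills the Hamming-weight condition deterministically, large $t$ kills the containment probability); once this is spelled out the rest is a one-line computation plus a union bound. Combining this lemma with the analogue of \Cref{lemma: Bad is the only different place between Dy and Dn} (which transfers verbatim to the present $\Dy,\Dn$, since conditioned on $\overline{\Bad}$ and on $\bA,x_{\bC}$, the queries in an action subcube all lie on one side of the middle band, and so depend only on the single uniform bit $\bb(x_{\bC})$ in both $\Dy$ and $\Dn$) and the two distance lemmas, \Cref{thm:lower-tolerant2} follows from Yao's principle by the same calculation as in the proof of \Cref{thm:lower-tolerant}.
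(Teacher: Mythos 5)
Your proof is correct and takes essentially the same approach as the paper: show that $\Bad_{xy}$ forces the Hamming distance $t$ of $x,y$ to be at least $0.1\sqrt{a}$ (so that the weight-gap condition is vacuous for small $t$), bound $\Pr[x_{\bC}=y_{\bC}]=\Pr[D\subseteq\bA]\le 2^{-t}$ for a uniformly random $|\bA|=n/2$, and union-bound over the $q^2$ pairs. The paper's version of this lemma is nearly as terse, with the same dichotomy and the same $2^{-\Omega(\sqrt{n})}$ conclusion; your explicit computation of $\binom{n-t}{a-t}/\binom{n}{a}\le 2^{-t}$ and the observation that conditioning on $D\subseteq\bA$ makes $|x_{\bA}|-|y_{\bA}|$ exactly a sum over $D$ are just slightly more detailed renderings of the same argument.
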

\begin{proof}
Let $x$ and $y$ be two points queried by $\calA$.
For $|x_{\bA}|>a/2+0.05\sqrt{a}$ and $|y_{\bA}|<a/2-0.05\sqrt{a}$ to hold,
  it must be the case that $x$ and $y$ have Hamming distance at least $0.1\sqrt{a}$.
However, for any $x$ and $y$ with Hamming distance at least $0.1\sqrt{a}\ge 0.07\sqrt{n}$,
  the probability of $x_{\bC}=y_{\bC}$ is 
  at most $2^{-0.06\sqrt{n}}$.
The lemma then follows from a union bound over all pairs of points queried by $\calA$.
\end{proof}

The proof of \Cref{thm:lower-tolerant2} follows from the same steps as that of  \Cref{thm:lower-tolerant}.


\section{Discussion}
\label{sec:barriers}

Our results suggest several intriguing directions and possibilities for future work.

\paragraph{The Role of Adaptivity in Tolerant Testing.} Recall that the best known upper bound for tolerant monotonicity (resp.~unateness) testing is a $2^{\wt{O}(\sqrt{n})}$-query non-adaptive algorithm that goes through agnostic learners for monotone (resp.~unate) functions~\cite{BshoutyTamon:96,KKMS:08,FKV17}. No non-trivial \emph{adaptive} lower bounds are known for tolerant monotonicity or unateness testing, and no tolerant testing algorithms are known which employ adaptivity. For tolerant $k$-junta testing in the constant gap setting (i.e. when $\eps := \eps_2 - \eps_1 = \Theta(1)$), recall that De, Mossel, and Neeman~\cite{DMN19} give a $2^{O(k)}$-query non-adaptive algorithm. Subsequent work by Iyer, Tal, and Whitmeyer~\cite{ITW21} gives a $2^{O(\sqrt{k})}$-query \emph{adaptive} algorithm for the same task; very recently, Chen and Patel~\cite{ChenPatel23} gave a $k^{\Omega(1)}$-query lower bound against adaptive tolerant junta testers. 

The prior discussion highlights a gap in our understanding: Does adaptivity help for tolerant monotonicity, unateness, or junta testing? Proving improved lower bounds against adaptive algorithms, or alternatively designing efficient testers that make adaptive queries for these problems, is a natural next step for future work.

\paragraph{A Barrier to Improving the Monotonicity \& Unateness Lower Bound.} One might hope to improve the $2^{\Omega(n^{1/4})}$-query lower bound against non-adaptive monotonicity and unateness testers from \Cref{sec:mono} by using another Boolean function on the control subcube instead of the Talagrand DNF. Tracing through the proof of \Cref{thm:lower-tolerant} in \Cref{sec:mono}, 
it is not too difficult to establish the following:

\begin{proposition} \label{prop:hope}
Let $\eps\in(0,1)$. For $0 \leq \tau \leq \frac{1}{2}$, suppose there exist $L$ \emph{disjoint} sets $S_1, S_2, \ldots, S_L \sse\zo^n$ with $L:= 0.01\cdot 2^{n^{1 - \tau}/\epsilon}$ such that 

\begin{enumerate}
	\item $2^{-n}|\calS| \geq \Omega\pbra{\max\cbra{\eps, \frac{1}{\sqrt{n}}}}$ where $\calS := S_1 \sqcup S_2 \sqcup \ldots \sqcup S_L$; 
	\item If $x, y \in \calS$ and $x\leq y$, then $x, y \in S_i$ for some $i\in[L]$; and 
	\item Define the function $\sens^{-}_{\calS} : \calS \to [n]$ as 
	\[\sens^{-}_{\calS}(x) := |\{i\in [n] : x_i = 1, x^{\oplus i} \notin \calS\}|
	\]
	where $x^{\oplus i} := (x_1, \ldots, 1-x_i, \ldots, x_n)$; then we have 
	\[\Ex_{\bx\sim\calS}\sbra{\sens^{-}_{\calS}(\bx)} = \Omega(n^{1-\tau}).\]
\end{enumerate}
Then any non-adaptive algorithm for $(\eps_1, \eps_2)$-tolerant monotonicity or unateness testing must make $2^{n^{\sqrt{(1-\tau)/\eps}}}$ queries, where $\eps_1=\Theta(\eps),\eps_2-\eps_1=\Theta(\eps)$ as in \Cref{thm:lower-tolerant}.
\end{proposition}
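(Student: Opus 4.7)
The plan is to mirror the argument of Theorem \ref{thm:lower-tolerant} step for step, substituting the abstract family $\{S_1,\ldots,S_L\}$ for the terms of the Talagrand DNF in the definitions of $\Dy$ and $\Dn$, and invoking the three hypotheses of the proposition at exactly the points where the corresponding structural properties of the Talagrand distribution were used. The testing domain is $\zo^N$ with $N=n+a$, where the action-cube dimension $a$ is chosen at the end to balance two tail probabilities; the control subcube $\zo^{\bm{C}}$ with $\bm{C}:=[N]\setminus \bm{A}$ is identified with $\zo^n$ after fixing any bijection, so that the sets $S_\ell$ may be regarded as subsets of $\zo^{\bm{C}}$. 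A draw from $\Dy$ samples a uniform $\bm{A}\in\binom{[N]}{a}$ and a uniform $\bm{b}\in\zo^L$, and sets $\fyes(x)=h^{(+,\bm{b}_\ell)}(x_{\bm{A}})$ whenever $x_{\bm{C}}\in S_\ell$ and the control-weight lies in the appropriate slab; the remaining control-cube points receive values from a fixed monotone completion that sends the upward-closure of $\calS$ minus $\calS$ to $1$ and the downward-closure of $\calS$ minus $\calS$ to $0$. Condition (2) is exactly what makes these two closures disjoint, so such a completion exists. The distribution $\Dn$ is defined analogously with $h^{(-,\bm{b}_\ell)}$.

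The distance estimates then follow by the arguments of Lemmas \ref{lemma: close to monotone} and \ref{lem:hehe1}: $\fyes$ becomes monotone after erasing the set of points whose action-projection lies in the middle slab $[a/2\pm c_1\sqrt{a}]$ and whose control-projection lies in $\calS$, and condition (2) ensures that up-pairs within $\calS$ share an $S_\ell$, reducing the monotonicity check to the action-cube calculation already carried out in Section \ref{sec:mono}; the fraction erased is $O(c_1)\cdot 2^{-n}|\calS|=O(c_1\eps)$ by condition (1). For the no-side, condition (1) again supplies $\Omega(\eps)$ mass on $\calS$ and a constant probability that $\bm{b}$ is roughly balanced on $\calS$, after which Claim \ref{claim:hehe1} produces the required $\Omega(\eps)$-distance from unate.

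The heart of the proof, and the step I expect to be the main obstacle, is the bad-event bound. Define $\Bad_{xy}$ as in Lemma \ref{lemma: Bad is unlikely}: some $\ell$ has $x_{\bm{C}},y_{\bm{C}}\in S_\ell$ while $|x_{\bm{A}}|>a/2+c_1\sqrt{a}$ and $|y_{\bm{A}}|<a/2-c_1\sqrt{a}$. The event $\diamond$ that enough differing coordinates of a Hamming-distance-$t$ pair land in $\bm{A}$ is bounded exactly as in Lemma \ref{lemma: Bad is unlikely} by $\bigl(O(t\sqrt{a}/N)\bigr)^{\Omega(\sqrt{a})}$. The new and more delicate ingredient is the event $\star$ that the random projections share some $S_\ell$: because the $S_\ell$ are now fixed rather than drawn at random, one cannot randomize over the structure as in the Talagrand case. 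Instead I would condition on $y_{\bm{C}}\in S_\ell$, use condition (3) to argue that, on average over $\bm{A}$ and over which vertex plays the role of $y_{\bm{C}}$, each of the $t'$ coordinates on which $x_{\bm{C}}$ differs from $y_{\bm{C}}$ (in the $1\to 0$ direction) has probability $\Omega(n^{-\tau})$ of being an exit coordinate of $\calS$, and conclude $\Pr[\star\mid y_{\bm{C}}\in S_\ell]\le \exp(-\Omega(t'n^{-\tau}))$; taking expectation over $\bm{A}$ (so $t'$ concentrates near $t\cdot n/N$) and union-bounding over $\ell\in[L]$ then yields a usable tail.

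Finally, I would optimize over $a$ and a crossover threshold $t_0$ so that both $\Pr[\diamond]$ and $\Pr[\star]$ are at most $2^{-\Omega(n^{(1-\tau)/2}/\sqrt{\eps})}$ at $t=t_0$; this balance reduces to the balance in Lemma \ref{lemma: Bad is unlikely} at $\tau=1/2$ and degrades smoothly with $\tau$. A union bound over all pairs among the $q$ queries then gives $\Pr[\Bad]=o_N(1)$ provided $q$ is at most the claimed bound, Lemma \ref{lemma: Bad is the only different place between Dy and Dn} applies verbatim because its proof uses only the yes/no asymmetry inside the action cube, and Yao's principle (Theorem \ref{thm:yao-minimax}) completes the proof.
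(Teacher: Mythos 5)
The paper does not actually prove Proposition~\ref{prop:hope}; it merely remarks that ``tracing through the proof of Theorem~\ref{thm:lower-tolerant}'' suffices, so your task was genuinely to fill in the sketch. Your overall blueprint --- embedding the fixed sets $S_\ell$ in a random control subcube, re-running Lemmas~\ref{lemma: close to monotone}--\ref{lem:hehe1} for the distance estimates, re-running Lemma~\ref{lemma: Bad is the only different place between Dy and Dn} for indistinguishability, and reducing everything to a two-term bound $\min(\Pr[\diamond],\Pr[\star])$ --- is the right one, and you correctly identify that the only non-mechanical step is $\Pr[\star]$. (A cosmetic point: the query bound $2^{n^{\sqrt{(1-\tau)/\eps}}}$ in the statement is clearly a typo for $2^{\Omega(n^{(1-\tau)/2}/\sqrt{\eps})}$, the value that recovers Theorem~\ref{thm:lower-tolerant} at $\tau=1/2$; you silently read it this way, which is right, but you should say so.)

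The $\Pr[\star]$ bound, however, is exactly where your proposal has a genuine gap, and your own phrasing (``the step I expect to be the main obstacle'') suggests you know it. Three separate things are unjustified. First, you invoke Condition~3 to claim ``each of the $t'$ difference coordinates has probability $\Omega(n^{-\tau})$ of being an exit coordinate,'' but Condition~3 is an expectation over a uniform $\bx\sim\calS$, whereas the point $y_{\bm{C}}$ you are conditioning on is distributed according to the pushforward of the random partition $\bm{A}$, which need not resemble uniform on $\calS$ and could in principle concentrate on low-sensitivity points; an average-case hypothesis cannot by itself deliver a per-query-pair tail bound against an adversarial choice of $x,y$. Second, even granting a per-coordinate exit probability of $\Omega(n^{-\tau})$, the implication $\Pr[\star\mid y_{\bm{C}}\in S_\ell]\le\exp(-\Omega(t'n^{-\tau}))$ does not follow: ``exit coordinate'' controls only a single downward flip from $y_{\bm{C}}$, but $x_{\bm{C}}$ is reached by flipping many coordinates (some $1\!\to\!0$, some $0\!\to\!1$), and for a generic $\calS$ a path that leaves $\calS$ can perfectly well re-enter it --- the ``once you exit you stay out'' property that makes the Talagrand argument go through comes from the monotone term structure ($T_\ell(x)=0$ is preserved under further downward flips), and is not implied by Conditions~1--3 as stated. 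Third, you propose to finish by ``union-bounding over $\ell\in[L]$,'' which would multiply the bound by $L=0.01\cdot 2^{n^{1-\tau}/\eps}$ and destroy it unless $t'\gtrsim n/\eps$; the actual Talagrand argument avoids this by observing that a point lies in at most one $S_\ell$, so $\Pr[\star]\le\max_\ell\Pr[\cdot\mid S(y)=\{\ell\}]$ with no union bound. Your bad-event analysis therefore needs to be redone: you must either extract from Conditions~2 and~3 a ``once out, stays out'' structure along the downward path (perhaps by passing to the upward closure of $\calS$, which is monotone, as the barrier proposition in Section~\ref{sec:barriers} does), or replace the per-coordinate heuristic with a direct counting argument over the random embedding $\bm{A}$, and in either case eliminate the union bound over $\ell$.
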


To align \Cref{prop:hope} with the Talagrand DNF construction in \Cref{subsec:kane-and-talagrand,sec:mono}, note that the $L$ disjoint sets $S_1, \ldots, S_L$ correspond to the uniquely-satisfying assignments of each of the $L$ terms of the Talagrand DNF.  Item~2 ensures that the ``yes'' (resp. ``no'') functions are indeed close (resp.~far) from being monotone or unate, and Item~3 ensures indistinguishability of the distributions (cf. \Cref{lemma: Bad is unlikely}).
Note that the function $\sens^{-}_{\calS}$ can be viewed as a ``directed'' variant of the standard notion of \emph{sensitivity} of a subset of the Boolean hypercube (cf. Chapter~2 of \cite{odonnell-book}).

\Cref{prop:hope} shows that if it were possible to obtain quantitatively stronger parameters with a variant of the Talagrand DNF, our approach would yield an improved lower bound for tolerant monotonicity or unateness testing. However, it turns out that no such improvement is possible:

\begin{proposition}
	Suppose $\eps = \Theta(1)$ and $\tau<1/2$.  Then there is no $\calS$ as in \Cref{prop:hope}.\end{proposition}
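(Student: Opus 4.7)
The plan is to argue by contradiction. Suppose $\eps = \Theta(1)$ and $\tau < 1/2$, and that $\mathcal{S} = S_1 \sqcup \cdots \sqcup S_L$ satisfies Items~1, 2, and~3, so $|\mathcal{S}| = \Omega(2^n)$, $L = 2^{\Omega(n^{1-\tau})}$, and $\sum_{x \in \mathcal{S}} \sens^-_{\mathcal{S}}(x) = \Omega(2^n \cdot n^{1-\tau})$.

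First I would extract the sharpest consequence of Item~2. For any $x \in S_i$ and any $j$ with $x_j = 1$, if $x^{\oplus j} \in \mathcal{S}$ then $x^{\oplus j}$ is comparable to $x$ and hence must lie in the same component $S_i$. Consequently, the ``lower shadows'' $\partial^{-} S_i := \{y \notin S_i : \exists x \in S_i,\, x_j = 1,\, y = x^{\oplus j}\}$ are entirely contained in $\mathcal{S}^c$ (\emph{not} just in $S_i^c$), and a first naive count gives $\sum_i |\partial^- S_i| \cdot n \geq \sum_{x\in\mathcal{S}} \sens^-_{\mathcal{S}}(x) = \Omega(2^n n^{1-\tau})$. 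Moreover, picking any representative $s_i \in S_i$ yields an antichain of size $L$ in $\{0,1\}^n$, so by Sperner $L \leq \binom{n}{\lfloor n/2 \rfloor}$; but this alone is consistent with $\tau > 0$, so more is needed.

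The heart of the argument would be a layered, Kruskal--Katona-style count. Since $|\mathcal{S}|/2^n = \Omega(1)$, a constant fraction of Hamming layers $k$ around $n/2$ have density $|\mathcal{S}^{(k)}|/\binom{n}{k} = \Omega(1)$; focus on such a layer. Each $S_i^{(k)} = S_i \cap \{|x|=k\}$ is a disjoint piece of the antichain $\mathcal{S}^{(k)}$, and by Item~2 its down-shadow at layer $k-1$ must miss \emph{every} other $S_j^{(k-1)}$. Combining this local disjointness across layers with an averaged version of Item~3 (which forces a typical component to have down-boundary of order $n^{1-\tau}$ per element), one obtains a constraint of the form ``each $S_i$ has level-profile concentrated in a window of width $O(n^{\tau})$'', because outside such a window Kruskal--Katona forces the down-shadow of $S_i$ to be too large to fit into $\mathcal{S}^c$ without colliding with other $S_j$'s. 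This narrow-window property then converts into an upper bound $L \leq 2^{O(n^{\tau} \log n)}$ via an entropy/LYM-style count applied to the representatives at the concentrated level.

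Finally, I would compare the derived bound $\log L = O(n^{\tau} \log n)$ with the hypothesis $\log L = \Omega(n^{1-\tau}/\eps) = \Omega(n^{1-\tau})$; this forces $n^{1-\tau} \leq O(n^\tau \log n)$, i.e.\ $\tau \geq 1/2 - o(1)$, contradicting $\tau < 1/2$. The main obstacle I expect is in the second paragraph's shadow argument: the shadows $\partial^- S_i$ may overlap freely in $\mathcal{S}^c$, so one cannot simply add $|\partial^- S_i|$'s. Handling this requires carefully tracking the multiplicity with which a point $y \in \mathcal{S}^c$ serves as a lower neighbor for distinct components (bounded by $n - |y|$) together with the pairwise-incomparability constraint, and combining the resulting bound with Kruskal--Katona inside each component. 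Getting the constants to line up precisely at the $\tau = 1/2$ threshold is what makes Talagrand's DNF the optimal choice for our approach.
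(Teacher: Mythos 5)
The gap in your proposal is the one you flagged yourself: because the shadows $\partial^- S_i$ may overlap heavily inside $\calS^c$, summing $|\partial^- S_i|$ proves nothing, and the claim that each $S_i$ has ``level-profile concentrated in a window of width $O(n^\tau)$''---which is what would let you conclude $L \le 2^{O(n^\tau \log n)}$---is asserted, not derived. You identified the right obstacle and the right target conclusion, but the Kruskal--Katona / LYM / entropy machinery you invoke remains a plan rather than a proof, and the Sperner observation you do establish ($L\le\binom{n}{\lfloor n/2\rfloor}$) is, as you note, far too weak on its own.

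The paper's argument is dramatically shorter and avoids shadows and level profiles entirely. Let $f$ be the monotone Boolean function on $\zo^n$ given by the upward closure of $\calS$, i.e., $f(x)=1$ iff $x\ge y$ for some $y\in\calS$. Item~1 with $\eps=\Theta(1)$ gives $|\calS|=\Omega(2^n)$, so Item~3 yields $\Omega(2^n\cdot n^{1-\tau})$ pairs $(x,i)$ with $x\in\calS$, $x_i=1$, $x^{\oplus i}\notin\calS$; via Item~2 these are sensitive edges of $f$, giving $\Ex_{\bx\sim\zo^n}[\sens_f(\bx)]=\Omega(n^{1-\tau})$. But every monotone Boolean function on $\zo^n$ has average sensitivity $O(\sqrt n)$, so $n^{1-\tau}=O(\sqrt n)$, contradicting $\tau<1/2$. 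That single classical inequality is exactly the obstruction you correctly sensed makes Talagrand's DNF optimal for this approach; the right move was to pass to the monotone closure and invoke it directly rather than re-derive it through layered shadow counting.
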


\begin{proof}
	Suppose, for the sake of contradiction, that such an $\calS$ exists. Consider the monotone Boolean function $f\isazofunc$ obtained by taking the upward closure of $\calS$, i.e. 
	\[f(x) = \begin{cases}
		1 & \text{there exists}~ y\in S~\text{such that~} x\geq y\\
		0 & \text{otherwise}
	\end{cases}.\]
	Writing $\sens_f(x)$ for the \emph{sensitivity} of the function $f$ at $x$ (cf. Chapter~2 of \cite{odonnell-book}), we then have from Item~3 of \Cref{prop:hope} that 
	\[\Ex_{\bx\sim\zo}\sbra{\sens_f(\bx)} \geq \Omega(n^{1-\tau}).\]
	However, it is well known that the average sensitivity of a monotone Boolean function is at most $O(\sqrt{n})$ (cf. Theorem~2.33 of \cite{odonnell-book}), resulting in a contradiction for $\tau < \frac{1}{2}$.
\end{proof}

This suggests that quantitatively improving on \Cref{thm:lower-tolerant} may require a substantially new construction.

%
%

\section*{Acknowledgements}

X.C. is supported by NSF grants IIS-1838154, CCF-2106429, and CCF-2107187. A.D. is supported by NSF grants CCF-1910534 and CCF-2045128. Y.L. is supported by NSF grants IIS-1838154, CCF-2106429 and CCF-2107187. S.N. is supported by NSF grants IIS-1838154, CCF-2106429, CCF-2211238, CCF-1763970, and
CCF-2107187. R.A.S. is supported by NSF grants IIS-1838154, CCF-2106429, and CCF-2211238. This work was partially completed while some of the authors were visiting the Simons Institute for the Theory of Computing at UC Berkeley. 

\begin{flushleft}
\bibliographystyle{alpha}
\bibliography{allrefs}
\end{flushleft}

\appendix

\end{document}